\theoremstyle{definition}
\newtheorem{theorem}{Theorem}
\newtheorem{definition}{Definition}
\newtheorem{example}{Example}
\newtheorem{lemma}{Lemma}
\theoremstyle{remark}
\newtheorem*{remark}{Remark}
\begin{document}
\title{Random Popular Matchings with Incomplete Preference Lists\thanks{This paper is an extended version of \cite{ruangwises}, which appeared at WALCOM 2018.}}
\author[1]{Suthee Ruangwises\thanks{\texttt{ruangwises.s.aa@m.titech.ac.jp}}}
\author[1]{Toshiya Itoh\thanks{\texttt{titoh@c.titech.ac.jp}}}
\affil[1]{Department of Mathematical and Computing Science, Tokyo Institute of Technology, Tokyo, Japan}
\date{}
\maketitle

\begin{abstract}
Given a set $A$ of $n$ people and a set $B$ of $m \geq n$ items, with each person having a list that ranks his/her preferred items in order of preference, we want to match every person with a unique item. A matching $M$ is called \textit{popular} if for any other matching $M'$, the number of people who prefer $M$ to $M'$ is not less than the number of those who prefer $M'$ to $M$. For given $n$ and $m$, consider the probability of existence of a popular matching when each person's preference list is independently and uniformly generated at random. Previously, Mahdian \cite{mahdian} showed that when people's preference lists are \textit{strict} (containing no ties) and \textit{complete} (containing all items in $B$), if $\alpha = m/n > \alpha_*$, where $\alpha_* \approx 1.42$ is the root of equation $x^2 = e^{1/x}$, then a popular matching exists with probability $1-o(1)$; and if $\alpha < \alpha_*$, then a popular matching exists with probability $o(1)$, i.e. a phase transition occurs at $\alpha_*$. In this paper, we investigate phase transitions in the case that people's preference lists are strict but not complete. We show that in the case where every person has a preference list with length of a constant $k \geq 4$, a similar phase transition occurs at $\alpha_k$, where $\alpha_k \geq 1$ is the root of equation $x e^{-1/2x} = 1-(1-e^{-1/x})^{k-1}$.

\textbf{Keywords:} popular matching, incomplete preference lists, phase transition, complex component
\end{abstract}

\section{Introduction}

A simple problem of matching people with items, with each person having a list that ranks his/her preferred items, models many important real-world situations such as the assignment of graduates to training positions \cite{hylland}, families to government-subsidized housing \cite{yuan}, and DVDs to subscribers \cite{mahdian}. The main target of such problems is to find an ``optimal'' matching in each situation. Various definitions of optimality have been proposed. The least restrictive one is \textit{Pareto optimality} \cite{abdulkadiroglu,abraham1,roth}. A matching $M$ is Pareto optimal if there is no other matching $M'$ such that at least one person prefers $M'$ to $M$ but no one prefers $M$ to $M'$. Other stronger definitions include \textit{rank-maximality} \cite{irving} (allocating maximum number of people to their first choices, then maximum number to their second choices, and so on), and \textit{popularity} \cite{abraham,gardenfors} defined below.

\subsection{Popular Matching}

Consider a set $A$ of $n$ people and a set $B$ of $m \geq n$ items, with $\alpha = m/n \geq 1$. Each person in $A$ has a preference list that ranks some items in $B$ in order of preference. A preference list is \textit{strict} if it does not contain ties, and is \textit{complete} if it contains all items in $B$. Each person can only be matched with an item in his/her preference list, and each item can be matched with at most one person.

For a matching $M$, a person $a \in A$, and an item $b \in B$, let $M(a)$ denote an item matched with $a$, and $M(b)$ denote a person matched with $b$ (for convenience, let $M(a)$ be \textit{null} for an unmatched person $a$). Let $r_a(b)$ denote the rank of item $b$ in $a$'s preference list, with the most preferred item having rank 1, the second most preferred item having rank 2, and so on (for convenience, let $r_a(null) = \infty$). For any pair of matchings $M$ and $M'$, define $\phi(M,M')$ to be the number of people who prefer $M$ to $M'$, i.e. $\phi(M,M') = |\{a \in A|r_a(M(a)) < r_a(M'(a))\}|$. We say that a matching $M$ is \textit{popular} if $\phi(M,M') \geq \phi(M',M)$ for every other matching $M'$. As the relation $\phi(M,M') \geq \phi(M',M)$ is not transitive, a popular matching may or may not exist depending on the preference lists of people. See Example \ref{ex}.

A probabilistic variant of this problem, the \textit{Random Popular Matching Problem} (\textsc{rpmp}), studies the probability that a popular matching exists in a random instance when given $n$ and $m$, and each person's preference list is defined independently by selecting the first item $b_1 \in B$ uniformly at random, the second item $b_2 \in B\setminus \{b_1\}$ uniformly at random, the third item $b_{3} \in B \setminus\{b_{1},b_{2}\}$ uniformly at random, and so on.

\begin{example} \label{ex}
Consider the following instance with three people $a_1,a_2,a_3$ and three items $b_1,b_2,b_3$, with everyone having the same preferences.

\begin{center}
	\begin{minipage}{0.3\textwidth}
        \underline{Preference Lists} \\
        $\boldsymbol{a_1:} \hspace{0.2cm} b_1, b_2, b_3$ \\
        $\boldsymbol{a_2:} \hspace{0.2cm} b_1, b_2, b_3$ \\
        $\boldsymbol{a_3:} \hspace{0.2cm} b_1, b_2, b_3$ \\
    \end{minipage}
    \begin{minipage}{0.4\textwidth}
        $M_1 = \{\{a_1,b_1\}, \{a_2,b_2\}, \{a_3,b_3\}\}$ \\
				$M_2 = \{\{a_1,b_2\}, \{a_2,b_3\}, \{a_3,b_1\}\}$ \\
        $M_3 = \{\{a_1,b_3\}, \{a_2,b_1\}, \{a_3,b_2\}\}$ \\
    \end{minipage}
\end{center}

For the three above matchings, we have $\phi(M_1,M_2) = 2 > 1 = \phi(M_2,M_1)$. Similarly, we also have $\phi(M_2,M_3) = 2 > 1 = \phi(M_3,M_2)$ and $\phi(M_3,M_1) = 2 > 1 = \phi(M_1,M_3)$. In fact, a popular matching does not exist in this instance.
\end{example}

\subsection{Related Work}

The concept of popularity of a matching was first introduced by Gardenfors \cite{gardenfors} in the context of the \textit{Stable Marriage Problem}. Abraham et al. \cite{abraham} presented the first polynomial time algorithm to find a popular matching in a given instance, or to report that none exists. Later, Mestre \cite{mestre} generalized that algorithm to the case where people are given different voting weights. Manlove and Sng \cite{manlove} presented an algorithm to determine whether a popular matching exists in a setting known as the \textit{Capacitated House Allocation Problem}, which allows an item to be matched with more than one person. The notion of popularity also applies when the preference lists are two-sided (matching people with people), both in a bipartite graph (\textit{Marriage Problem}) and general graph (\textit{Roommates Problem}). Bir\'{o} et al. \cite{biro} developed an algorithm to test popularity of a matching in these two settings, and proved that determining whether a popular matching exists in these settings is NP-hard when ties are allowed.

While a popular matching does not always exist, McCutchen \cite{mccutchen} introduced two measures of ``unpopularity" of a matching, the \textit{unpopularity factor} and the \textit{unpopularity margin}, and showed that finding a matching that minimizes either measure is NP-hard. Huang et al. \cite{huang} later gave algorithms to find a matching with bounded values of these measures in certain instances. Kavitha et al. \cite{kavitha} introduced the concept of a \textit{mixed matching}, which is a probability distribution over matchings, and proved that a mixed matching that is popular always exists.

For \textsc{rpmp} in the case with strict and complete preference lists, Mahdian \cite{mahdian} proved that if $\alpha = m/n > \alpha_*$, where $\alpha_* \approx 1.42$ is the root of equation $x^2 = e^{1/x}$, then a popular matching exists with high ($1-o(1)$) probability in a random instance. On the other hand, if $\alpha < \alpha_*$, a popular matching exists with low ($o(1)$) probability. The point $\alpha = \alpha_*$ can be regarded as a phase transition point, at which the probability rises from asymptotically zero to asymptotically one. Itoh and Watanabe \cite{itoh} later studied the weighted case where each person has weight either $w_1$ or $w_2$, with $w_1 \geq 2w_2$, and found a phase transition at $\alpha = \Theta(n^{1/3})$.

\subsection{Our Contribution}

\textsc{rpmp} in the case that preference lists are strict but not complete, with every person's preference list having the same length of a constant $k$ was simulated by Abraham et al. \cite{abraham}, and was conjectured by Mahdian \cite{mahdian} that the phase transition will shift by an amount exponentially small in $k$. However, the exact phase transition point, or whether it exists at all, had not been found yet. In this paper, we study this case and prove a phase transition at $\alpha = \alpha_k$, where $\alpha_k \geq 1$ is the root of equation $x e^{-1/2x} = 1-(1-e^{-1/x})^{k-1}$. In particular, we prove that for $k \geq 4$, if $\alpha > \alpha_k$, then a popular matching exists with high probability; and if $\alpha < \alpha_k$, then a popular matching exists with low probability. For $k \leq 3$, where the equation does not have a solution in $[1, \infty)$, a popular matching always exists with high probability for any value of $\alpha \geq 1$ without a phase transition.

\section{Preliminaries}

For convenience, for each person $a \in A$ we append a unique auxiliary \textit{last resort item} $\ell_a$ to the end of $a$'s preference list ($\ell_a$ has lower preference than all other items in the list). By introducing the last resort items, we can assume that every person is matched because we can simply match any unmatched person $a$ with $\ell_a$. Note that these last resort items are not in $B$ and do not count toward $m$, the total number of ``real items.'' Also, let $L = \{\ell_a|a \in A\}$ be the set of all last resort items.

For each person $a \in A$, let $f(a)$ denote the item at the top of $a$'s preference list. Let $F$ be the set of all items $b \in B$ such that there exists a person $a' \in A$ with $f(a')=b$, and let $S = B-F$. Then, for each person $a \in A$, let $s(a)$ denote the highest ranked item in $a$'s preference list that is not in $F$. Note that $s(a)$ is well-defined for every $a \in A$ because of the existence of last resort items.

We say that a matching $M$ is \textit{$A$-perfect} if every person $a \in A$ is matched with either $f(a)$ or $s(a)$. Abraham et al. \cite{abraham} proved the following lemma, which holds for any instance with strict (not necessarily complete) preference lists.

\begin{lemma} \label{pmexists}
\cite{abraham} In an instance with strict preference lists, a popular matching exists if and only if an $A$-perfect matching exists.
\end{lemma}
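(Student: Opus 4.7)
The plan is to prove the stronger structural characterization that $M$ is popular if and only if (i) every item in $F$ is matched in $M$ and (ii) $M$ is $A$-perfect, from which the lemma follows at once: the forward direction gives popular $\Rightarrow$ $A$-perfect directly, while the converse is handled by observing that given any $A$-perfect $M$, whenever some $b = f(a) \in F$ is unmatched (so $M(a) = s(a)$), reassigning $M(a) := b$ preserves $A$-perfection and strictly increases the number of matched $F$-items, so the process terminates in an $A$-perfect matching that also satisfies (i), which is popular by the characterization.

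For popular $\Rightarrow$ (i),(ii), I argue by contradiction. If $b = f(a) \in F$ is unmatched in $M$, setting $M'(a) := b$ changes only $a$'s preference, strictly in favor of $M'$, defeating $M$. Given (i), if some $a$ has $M(a) \notin \{f(a), s(a)\}$, then by definition of $s(a)$ either $M(a) \in F$ is ranked above $s(a)$, or $M(a)$ is ranked strictly below $s(a)$; in either case a defeating $M'$ is constructed by a chain of promotions that rotates $a$ toward $f(a)$ or $s(a)$ and cascades the displaced occupants further down, closing the chain at an unmatched item or a last-resort item so that the cumulative preference tally strictly favors $M'$. For (i),(ii) $\Rightarrow$ popular, for any $M'$ I analyze the components of $G = M \triangle M'$, which are alternating paths and even cycles. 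Labeling each person $a$ on a component as $+$ or $-$ by preference, any $-$-person satisfies $M(a) = s(a)$ (since $f(a)$ would be $a$'s top) and $M'(a) \in F$ (by definition of $s(a)$), so $M'(a) = f(a')$ for some $a'$; by (i), $f(a')$ is matched in $M$ to some $a''$, and since $s(a'') \notin F$ while $f(a') \in F$, condition (ii) forces $M(a'') = f(a'') = f(a')$, so $a''$ is $+$ (because $M'(a'') \neq f(a'')$). This yields an injection from $-$-people to $+$-people within each component, delivering $\phi(M,M') \geq \phi(M',M)$.

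The main obstacle is the chain-of-promotions construction in the forward direction's case (ii): a single-swap modification typically only yields a $1$--$1$ tie, so the defeating $M'$ must be built by extending the chain through several people via the $f$ and $s$ functions, terminating cleanly at an item in $S$ or at a last-resort item in order to secure an uncompensated $+$. The backward-direction component analysis is largely mechanical once the $f$-based injection is in place, with the auxiliary last-resort items ensuring that every alternating path has benign boundary behavior and that every person is matched in every matching being compared.
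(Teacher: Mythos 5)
The paper never proves this lemma itself --- it is imported from Abraham et al.\ \cite{abraham} --- so there is no in-paper argument to compare against; your proposal is a reproof of the characterization theorem of that reference, and your architecture is exactly the standard one: $M$ is popular if and only if every item of $F$ is matched in $M$ and $M$ is $A$-perfect, plus a repair step turning an arbitrary $A$-perfect matching into one satisfying the first condition. Your backward direction is essentially complete and correct (the injection $a \mapsto M(M'(a))$ from people preferring $M'$ to people preferring $M$ is the right argument, and the well-definedness checks you sketch all go through), and the repair step is sound, since reassigning $a$ to an unmatched $f(a)$ frees only $s(a) \notin F$ and so strictly increases the number of matched $F$-items.

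The genuine gap is in the forward direction, condition (ii), which is the heart of ``popular $\Rightarrow$ $A$-perfect.'' You correctly observe that a single swap yields only a $1$--$1$ tie, but the construction you then describe --- promote $a$ and ``cascade the displaced occupants further down'' --- cannot close this gap: every occupant pushed down is a vote for $M$ over $M'$, so a chain with one promotion and $k \geq 1$ demotions tallies $1$ against $k$ and never defeats $M$, regardless of where the chain terminates. The missing idea is that the displaced occupant must be promoted rather than demoted, with exactly one person absorbing the loss. Concretely: if $M(a)$ is worse than $s(a)$ and $s(a)$ is occupied by $a'$, then $M(a') = s(a) \notin F$ forces $M(a') \neq f(a')$, so moving $a'$ to $f(a')$ is a strict promotion; by condition (i) (already established) $f(a')$ is occupied by some $a''$, who is the unique person demoted, giving a $2$-versus-$1$ tally. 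The case where $M(a)$ lies strictly between $f(a)$ and $s(a)$ is handled by the symmetric exchange: $M(a) = f(a_1)$ for some $a_1 \neq a$, promote $a$ to $f(a)$ and $a_1$ to $f(a_1) = M(a)$, and demote the former occupant of $f(a)$. Without this specific two-promotions--one-demotion exchange, which you flag as ``the main obstacle'' but do not supply, the forward direction is not proved.
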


The proof of Lemma \ref{pmexists} first shows that a matching $M$ is popular if and only if $M$ is an $A$-perfect matching such that every item in $F$ is matched in $M$. This equivalence implies the forward direction of the lemma. On the other hand, if an $A$-perfect matching $M$ exists in an instance, the proof shows that we can modify $M$ to make every item in $F$ matched, hence implying the backward direction of the lemma.

It is worth noting another useful lemma about independent and uniform selection of items at random proved by Mahdian \cite{mahdian}, which will be used throughout this paper.

\begin{lemma} \label{pick}
\cite{mahdian} Suppose that we pick $y$ elements from the set $\{1,...,z\}$ independently and uniformly at random (with replacement). Let a random variable $X$ be the number of elements in the set that are not picked. Then, $\mathbb{E}[X]=e^{-y/z}z-\Theta(1)$ and $\text{Var}[X] < \mathbb{E}[X]$.
\end{lemma}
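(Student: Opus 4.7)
The plan is to decompose $X$ as a sum of indicator variables and handle the mean and variance claims separately. First I would define, for each $i \in \{1,\ldots,z\}$, the indicator $X_i$ that equals $1$ precisely when element $i$ is never picked, so that $X = \sum_{i=1}^{z} X_i$. Since the $y$ picks are independent and each avoids $i$ with probability $1-1/z$, one has $\Pr[X_i = 1] = (1-1/z)^y$, and by linearity $\mathbb{E}[X] = z(1-1/z)^y$ exactly. The remaining work splits into (a) approximating this exact expression by $ze^{-y/z}$ up to an additive $\Theta(1)$ error, and (b) bounding the variance.

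For (a), I would use the Taylor expansion $\ln(1-1/z) = -1/z - 1/(2z^2) + O(1/z^3)$ and then exponentiate to obtain $z(1-1/z)^y = ze^{-y/z} \cdot \left(1 - y/(2z^2) + O(y/z^3)\right)$. Collecting terms, the difference $ze^{-y/z} - z(1-1/z)^y$ evaluates to $(y/(2z))\,e^{-y/z} + O(y/z^2)$. In the regime considered throughout the paper both $y$ and $z$ are linear in $n$, so the ratio $y/z$ is bounded above and below by positive constants, which makes this additive error genuinely $\Theta(1)$ rather than merely $O(1)$ or $o(1)$.

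For (b), the approach is to exploit pairwise negative correlation among the indicators. Writing $p = (1-1/z)^y$, for distinct $i,j$ the joint event $\{X_i = X_j = 1\}$ requires every pick to avoid both $i$ and $j$, which has probability $(1-2/z)^y$. Raising the elementary inequality $(1-1/z)^2 = 1 - 2/z + 1/z^2 > 1 - 2/z$ to the $y$-th power gives $p^2 > (1-2/z)^y$, hence $\mathrm{Cov}(X_i, X_j) < 0$ for every $i \neq j$. Summing, $\mathrm{Var}[X] < \sum_{i=1}^{z} \mathrm{Var}[X_i] = z\,p(1-p) < zp = \mathbb{E}[X]$, which is the desired bound.

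I do not anticipate any real obstacle here, as every step reduces to an elementary estimate. The only point requiring some care is justifying that the additive error in (a) is genuinely $\Theta(1)$, which relies on $y/z$ being bounded away from both $0$ and $\infty$ in the intended applications; this is in fact guaranteed by the problem's setting $m = \alpha n$ with $\alpha$ a constant.
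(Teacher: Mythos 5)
Your proposal is correct. Note that the paper itself gives no proof of this lemma --- it is imported verbatim from Mahdian \cite{mahdian} --- so there is nothing in-text to compare against; your argument (indicator decomposition, the exact value $\mathbb{E}[X]=z(1-1/z)^y$, the $\Theta(1)$ gap from $ze^{-y/z}$ via the second-order term of $\ln(1-1/z)$, and negative pairwise correlation giving $\mathrm{Var}[X]<\sum_i \mathrm{Var}[X_i]<\mathbb{E}[X]$) is the standard one and is essentially the proof in the cited source. Your caveat that the $\Theta(1)$ error requires $y/z$ bounded away from $0$ and $\infty$ is well taken and is indeed satisfied in every application in this paper ($y=n$, $z=m$, $m/n=\alpha$ constant).
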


\section{Complete Preference Lists Setting}

We first consider the setting that every person's preference list is strict and complete. Note that when $m > n$ and the preference lists are complete, the last resort items are not necessary.

From a given instance, we construct a \textit{top-choice graph}, a bipartite graph with parts $B' = B$ and $S' = S$ such that each person $a \in A$ corresponds to an edge connecting $f(a) \in B'$ and $s(a) \in S'$. Note that multiple edges are allowed in this graph. Previously, Mahdian \cite{mahdian} proved the following lemma.

\begin{lemma} \label{complexcomp1}
\cite{mahdian} In an instance with strict and complete preference lists, an $A$-perfect matching exists if and only if its top-choice graph does not contain a \textit{complex component}, i.e. a connected component with more than one cycle.
\end{lemma}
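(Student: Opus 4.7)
The plan is to prove the lemma by translating $A$-perfect matchings into orientations of the top-choice graph $G$ in which every vertex has in-degree at most one, and then characterizing when such an orientation exists in terms of the cycle structure of each connected component of $G$.

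First I would observe that each person $a \in A$ corresponds to the edge $\{f(a),s(a)\}$ of $G$, and Definition~\ref{defA} says that $a$ is matched to one of the two endpoints of that edge. If I orient each edge toward the endpoint the corresponding person is matched to, the requirement that every item be matched to at most one person translates exactly into the condition that every vertex has in-degree at most one. Conversely, any orientation of $G$ with that property yields an $A$-perfect matching by matching each person to the item its edge points to. Hence it suffices to show that $G$ admits an orientation with in-degree at most one at every vertex if and only if $G$ has no complex component.

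For the ``only if'' direction I would use a counting argument. In any orientation of a connected component $C$, the sum of in-degrees over the vertices of $C$ equals the number of edges of $C$. If $C$ is complex, its cyclomatic number is at least two, so $C$ has strictly more edges than vertices, and by pigeonhole some vertex must receive in-degree at least two. Thus no valid orientation of $C$ exists when $C$ is complex.

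For the ``if'' direction I would build a valid orientation component by component. A non-complex connected component is either a tree (with $v-1$ edges on $v$ vertices) or a unicyclic graph (with exactly $v$ edges). For a tree component I would root it at an arbitrary vertex and orient every edge from parent to child, so that each non-root vertex has in-degree one. For a unicyclic component I would orient the unique cycle consistently in one direction, giving each cycle vertex in-degree one from its predecessor, and then orient the pendant tree edges away from the cycle, giving every off-cycle vertex in-degree one from the neighbor closer to the cycle. Multi-edges produce $2$-cycles and are absorbed into the unicyclic case without modification. Taking the union of these per-component orientations yields a global orientation of $G$ with in-degree at most one everywhere, hence an $A$-perfect matching. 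The main obstacle is really the initial reduction from matchings to edge orientations; once that translation is in hand, both directions reduce to elementary facts about forests and unicyclic graphs, and the only bookkeeping is handling multi-edges and confirming the bipartite parts $F$ and $S$ pose no additional constraint.
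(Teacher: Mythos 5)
Your proposal is correct and follows essentially the same route as the paper (and Mahdian's original argument): reduce $A$-perfect matchings to orientations of the top-choice graph with in-degree at most one at every vertex, then characterize the existence of such an orientation by comparing edge and vertex counts per connected component (trees and unicyclic components admit one, complex components do not). The paper only sketches this reduction when transferring the lemma to the incomplete-list setting, and your edge-counting and explicit constructions for the tree and unicyclic cases fill in those details consistently with that sketch.
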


By Lemmas \ref{pmexists} and \ref{complexcomp1}, the problem of determining whether a popular matching exists is equivalent to determining whether the top-choice graph contains a complex component. However, the difficulty is that the number of vertices in the randomly generated top-choice graph is not fixed. Therefore, a random bipartite graph $G(x,y,z)$ with fixed number of vertices is defined as follows to approximate the top-choice graph.

\begin{definition} \label{defG1}
For integers $x,y,z$, $G(x,y,z)$ is a bipartite graph with $V \cup U$ as a set of vertices, where $V = \{v_1,v_2,...,v_x\}$ and $U = \{u_1,u_2,...,u_y\}$. Each of the $z$ edges of $G(x,y,z)$ is selected independently and uniformly at random (with replacement) from the set of all possible edges between a vertex in $V$ and a vertex in $U$.
\end{definition}

This auxiliary graph has properties closely related to the top-choice graph. Mahdian \cite{mahdian} then proved that if $\alpha > \alpha_* \approx 1.42$, then $G(m,h,n)$ contains a complex component with low probability for any integer $h \in [e^{-1/\alpha}m-m^{2/3}, e^{-1/\alpha}m+m^{2/3}]$, and used those properties to conclude that the top-choice graph also contains a complex component with low probability, hence a popular matching exists with high probability.

\begin{theorem} \label{thmmahdian1}
\cite{mahdian} In a random instance with strict and complete preference lists, if $\alpha > \alpha_*$, where $\alpha_* \approx 1.42$ is the solution of the equation $x^2e^{-1/x} = 1$, then a popular matching exists with probability $1-o(1)$.
\end{theorem}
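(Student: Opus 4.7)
By Lemma \ref{pmexists} and Lemma \ref{complexcomp1}, it suffices to prove that the top-choice graph $T$ contains no complex component with probability $1-o(1)$ whenever $\alpha > \alpha_*$. My plan is to approximate $T$ by the auxiliary random bipartite multigraph $G(m,h,n)$ of Definition \ref{defG1} for a suitable $h$, and then to run a first-moment argument on the complex subgraphs of $G(m,h,n)$. Since the parameter $h = |S|$ is itself random, the first task is to concentrate it; treating the first choices $f(a_1),\ldots,f(a_n)$ as $n$ independent uniform picks from $B$, Lemma \ref{pick} yields $\mathbb{E}[|S|] = me^{-1/\alpha} - \Theta(1)$ with variance at most its mean, so Chebyshev's inequality confines $|S|$ to the interval $I_\epsilon = [(1-\epsilon)me^{-1/\alpha},(1+\epsilon)me^{-1/\alpha}]$ off an event of probability $o(1)$. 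I would choose $\epsilon > 0$ small enough that $(1-\epsilon)\alpha^2 e^{-1/\alpha} > 1$ continues to hold, which is possible precisely because $\alpha > \alpha_*$.

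Next I would compare $T$, conditional on $|S| = h \in I_\epsilon$, with $G(m,h,n)$. Revealing the randomness in two stages --- first exposing $a \mapsto f(a)$ (which fixes the partition $B = F \sqcup S$), then revealing the remaining ranks --- one sees that conditional on $f(a)$, the vertex $s(a)$ is the first $S$-item encountered in a uniform ordering of $B\setminus\{f(a)\}$, so its conditional distribution is close to uniform on $S$ and essentially independent across $a$. This lets me couple $T$ with an instance of $G(m,h,n)$ (restricted, in its $V$-part, to the random set $F$) in such a way that any complex subgraph of $T$ is exhibited as a complex subgraph of the coupled $G(m,h,n)$, up to constant-factor loss in probability coming from the $F$-restriction and from the occasional collision $f(a)=f(a')$.

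Finally, I would bound the expected number of complex subgraphs in $G(m,h,n)$ by a first-moment calculation. A complex component must contain a minimal ``bicycle'' --- a pair of cycles sharing a path, sharing a vertex, or joined by a path --- and one can enumerate candidate bicycles by number of vertices and edges. Since each of the $n$ edges is placed independently in one of $mh$ slots, a fixed subgraph on $k$ edges appears with probability $O((mh)^{-k})$ times the polynomial number of labelled embeddings, so when summed over shapes the contribution reduces to a geometric-type series whose effective ratio is $n^2/(mh) = (\alpha^2 e^{-1/\alpha})^{-1}(1+o(1))$. Under my conditioning this ratio is bounded away from $1$, so the expected number of complex subgraphs is $o(1)$, and combining with the coupling and the concentration of $|S|$ finishes the proof. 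The main obstacle I expect is the coupling step: justifying rigorously that the slight non-uniformity and dependence in the distribution of $s(a)$, together with the restriction of $V$-endpoints to $F$, do not spoil the first-moment bound. The combinatorial enumeration of complex subgraphs in the final step is routine but bookkeeping-heavy, and must be carried out uniformly over $h \in I_\epsilon$.
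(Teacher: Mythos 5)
Your overall route---reduce via Lemmas \ref{pmexists} and \ref{complexcomp1} to showing the top-choice graph has no complex component, concentrate $|S|$ around $e^{-1/\alpha}m$ using Lemma \ref{pick} and Chebyshev, transfer to the fixed-size model $G(m,h,n)$, and finish with a first-moment count over minimal ``bicycles'' whose effective ratio is $n^2/(mh)\to(\alpha^2e^{-1/\alpha})^{-1}<1$---is exactly Mahdian's argument, which this paper reproduces in generalized form (Lemma \ref{lemG2}, and Lemma \ref{lemupperbound} with $\beta=1$; the incomplete-list machinery with $U'$ and $z_2$ is vacuous here). The one place you diverge, and which you yourself flag as the main obstacle, is the transfer step: you propose an approximate coupling with ``constant-factor loss'' to absorb the restriction of the $V$-endpoints to $F$ and the alleged near-uniformity of $s(a)$. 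That detour is unnecessary, and as stated it is the weak link, since the loss is not quantified. The paper's Lemma \ref{lemG2} makes the transfer exact: given the set $F$, each $s(a)$ is \emph{exactly} uniform on $S$ and independent across people, so conditioned on $|S|=h$ the top-choice graph has precisely the distribution of $G(m,h,n)$ conditioned on the number $X$ of $V$-vertices receiving no edge being $h$; the conditioning on $X=h$ is then removed by the Bayes inversion $\Pr[E\mid X=h]\Pr[X=h]=\Pr[X=h\mid E]\Pr[E]$ and a sum over the $O(m^{2/3})$ values of $h$ in the concentration window---which is also why the uniform-in-$h$ bound on the bad-subgraph probability must be $O(1/n)$ rather than merely $o(1)$, a quantitative point your outline omits. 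With that exact conditional identity substituted for your coupling, your sketch becomes the paper's proof.
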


Theorem \ref{thmmahdian1} serves as an upper bound of the phase transition point in the case of strict and complete preference lists. On the other hand, the following lower bound was also proposed by Mahdian \cite{mahdian} along with a sketch of the proof, although the fully detailed proof was not given.

\begin{theorem} \label{thmmahdian2}
\cite{mahdian} In a random instance with strict and complete preference lists, if $\alpha < \alpha_*$, then a popular matching exists with probability $o(1)$.
\end{theorem}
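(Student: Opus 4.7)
The strategy is to establish the contrapositive of the reduction used in Theorem \ref{thmmahdian1}. By Lemmas \ref{pmexists} and \ref{complexcomp1}, it suffices to show that when $\alpha < \alpha_*$, the top-choice graph contains a complex component with probability $1-o(1)$. The plan proceeds in three steps: (i) concentrate $|S|$, (ii) couple the top-choice graph with the auxiliary random bipartite multigraph $G(m,h,n)$, and (iii) show that the condition $\alpha < \alpha_*$ places $G(m,h,n)$ in the supercritical regime, where a giant component with cyclomatic excess at least $2$ exists with high probability.

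For step (i), I apply Lemma \ref{pick} to the $n$ i.i.d.\ uniform top-choice selections from $B$: this yields $\mathbb{E}[|S|] = e^{-1/\alpha}m - \Theta(1)$ with $\mathrm{Var}[|S|] < \mathbb{E}[|S|]$, so Chebyshev's inequality gives $|S| = e^{-1/\alpha}m\,(1+o(1))$ with probability $1-o(1)$. Write $h^* := e^{-1/\alpha}m$ and restrict attention to outcomes with $|S|=h$ lying in an appropriate window around $h^*$. For step (ii), conditional on $F$, the tail of each person's preference list is a uniform permutation of $B\setminus\{f(a)\}$, so the highest-ranked element of $S$---that is, $s(a)$---is uniformly distributed on $S$, independently across $a$. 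Thus the top-choice graph consists of $n$ conditionally independent edges from $F$ on the $B$-side to $S$, and since $(f(a))_{a\in A}$ is itself an i.i.d.\ uniform sample from $B$, the whole ensemble (after discarding the $|S|$ isolated vertices on the $B$-side) can be coupled with $G(m,h,n)$, in the same spirit as Mahdian's coupling in the proof of Theorem \ref{thmmahdian1}.

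For step (iii), a direct computation shows that $\alpha<\alpha_*$ is equivalent to $e^{1/\alpha}/\alpha^2 > 1$, which equals $n^2/(m h^*)$. This is precisely the supercritical threshold for $G(m,h,n)$: the expected number of two-step descendants in the bipartite BFS exploration from a random vertex is $(n/m)(n/h) = n^2/(mh)$. Standard random graph arguments---a supercritical two-type branching process combined with a second-moment computation, in the spirit of the Erd\H{o}s--R\'enyi giant component theorem---then yield, with probability $1-o(1)$, a giant component of $G(m,h,n)$ with $\Theta(n)$ edges and $\Theta(m)$ vertices whose edge count exceeds its vertex count, hence of cyclomatic excess at least $2$ and therefore complex. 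I expect the main obstacle to be step (ii): the top-choice graph's left endpoints are not unrestricted i.i.d.\ samples from $B$ but are i.i.d.\ conditional on their joint support equalling $F$, so the coupling to $G(m,h,n)$ must carefully account for these multiplicity constraints. A secondary issue, handled routinely by a union bound over the concentration window for $h$, is transferring the conclusion back to the top-choice graph uniformly.
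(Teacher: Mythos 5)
Your overall route --- reduce via Lemmas \ref{pmexists} and \ref{complexcomp1} to showing the top-choice graph has a complex component w.h.p., concentrate $|S|$ around $e^{-1/\alpha}m$, pass to the auxiliary graph $G(m,h,n)$, and invoke supercriticality of the two-type branching process with mean offspring product $n^2/(mh)=e^{1/\alpha}/\alpha^2>1$ --- is the same route the paper takes for its own lower bound (Lemma \ref{lemlowerbound} together with Appendix D; note the paper does not reprove Theorem \ref{thmmahdian2} itself but obtains the complete-lists case as the remark at the end of Appendix D). Your criticality computation is correct, and your step (ii) worry is real but is resolved cleanly by the reverse-conditioning identity of Lemma \ref{lemG2}: conditioned on $|S|=h$, the top-choice graph is exactly $G(m,h,n)$ conditioned on having $h$ isolated vertices in $V$, and writing $\Pr_H[E]=\sum_h\Pr_G[X=h\mid E]\Pr_G[E]$ removes the conditioning without any delicate coupling.

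The genuine gap is in step (iii), which you dismiss as standard. The Erd\H{o}s--R\'{e}nyi-style branching-process-plus-second-moment argument delivers a component of linear \emph{size}; it does not by itself deliver that this component has more edges than vertices, i.e.\ cyclomatic excess at least $2$. Asserting ``$\Theta(n)$ edges and $\Theta(m)$ vertices whose edge count exceeds its vertex count'' is precisely the claim that needs proof, and in the regime $\alpha$ close to $\alpha_*$ the global edge density $n/(m+h)$ can be below $1$, so no naive counting gives it. The paper closes this gap with a two-round exposure: build $G(m,h,(1-\epsilon)n)$ first, which is still supercritical for small $\epsilon$ and hence contains a giant component with probability $1-O(1/n)$, then sprinkle the remaining $\epsilon n$ edges; each independently lands with both endpoints in the giant component with constant probability, so at least two do except with exponentially small probability, and two extra edges inside one component force a complex component. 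You would need to add this (or an equivalent excess argument) to make step (iii) rigorous. A secondary quantitative point: Lemma \ref{lemG2} needs the failure probability for each fixed $h$ in the window of width $\Theta(m^{2/3})$ to be $O(1/n)$, not merely $o(1)$, so ``with probability $1-o(1)$'' per $h$ is not strong enough for the transfer back to the top-choice graph.
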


\section{Incomplete Preference Lists Setting}

The previous section shows known results in the setting that preference lists are strict and complete. However, preference lists in many real-world situations are not complete, as people may regard only some items as acceptable for them. In the setting that the preference lists are strict but not complete, we will consider the case that every person's preference list has equal length of a constant $k \leq m$ (not counting the last resort item). Such instance is called an \textit{instance with $k$-incomplete preference lists}.

\begin{definition} \label{defrandincomp}
For a positive integer $k \leq m$, a \textit{random instance with strict and $k$-incomplete preference lists} is an instance with each person's preference list chosen independently and uniformly from the set of all $\frac{m!}{(m-k)!}$ possible $k$-permutations of the $m$ items in $B$ at random.
\end{definition}

Recall that $F = \{ b \in B | \exists a' \in A, f(a') = b \}$, $S = B-F$, and for each person $a \in A$, $s(a)$ is the highest ranked item in $a$'s preference list not in $F$. The main difference from the complete preference lists setting is that in the incomplete preference lists setting, $s(a)$ can be either a real item or the last resort item $\ell_a$. For each person $a \in A$, let $P_a$ be the set of items in $a$'s preference list (not including the last resort item $\ell_a$). We then define $A_1 = \{ a \in A | P_a \subseteq F \}$ and $A_2 = \{ a \in A | P_a \nsubseteq F \}$. Note that $s(a) = \ell_a$ if and only if $a \in A_1$.

\subsection{Top-Choice Graph}

Analogously to the complete preference lists setting, we define the top-choice graph of an instance with strict and $k$-incomplete preference lists to be a bipartite graph with parts $B' = B$ and $S' \cup L'$, where $S' = S$ and $L'$ = $L$. Each person $a \in A_2$ corresponds to an edge connecting $f(a) \in B'$ and $s(a) \in S'$. We call these edges \textit{normal edges}. Each person $a \in A_1$ corresponds to an edge connecting $f(a) \in B'$ and $s(a) = \ell_a \in L'$. We call these edges \textit{last resort edges}.

Although the statement of Lemma \ref{complexcomp1} proved by Mahdian \cite{mahdian} is for the complete preference lists setting, exactly the same proof applies to incomplete preference lists setting as well. The proof first shows that an $A$-perfect matching exists if and only if each edge in the top-choice graph can be oriented such that each vertex has at most one incoming edge (because if an $A$-perfect matching $M$ exists, we can orient each edge corresponding to $a \in A$ toward the endpoint corresponding to $M(a)$, and vice versa). Then, the proof shows that for any undirected graph $H$, each edge of $H$ can be oriented in such a manner if and only if $H$ does not have a complex component. Thus we can conclude the following lemma.

\begin{lemma} \label{complexcomp2}
In an instance with strict and $k$-incomplete preference lists, an $A$-perfect matching exists if and only if its top-choice graph does not contain a complex component.
\end{lemma}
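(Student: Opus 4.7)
The plan is to reuse Mahdian's two-step argument, verifying that it still works once last-resort edges are included in the top-choice graph. Let $H$ denote the top-choice graph, with edge set $\{e_a : a \in A\}$, where each $e_a$ joins $f(a) \in B$ to $s(a) \in S \cup L$.

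First I would establish that an $A$-perfect matching exists if and only if the edges of $H$ can be oriented so that every vertex has in-degree at most one. Given an $A$-perfect matching $M$, orienting each $e_a$ toward $M(a) \in \{f(a), s(a)\}$ produces such an orientation, because $M$ being a matching forces every vertex to receive at most one incoming edge. Conversely, given an orientation of this kind, setting $M(a)$ to be the head of $e_a$ produces a matching $M$ with $M(a) \in \{f(a),s(a)\}$ for every $a$, which is $A$-perfect by definition. Normal and last-resort edges behave identically here, since a last-resort vertex $\ell_a$ has degree exactly one in $H$ and is incident only to $e_a$.

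Second I would show that $H$ admits such an orientation if and only if no component of $H$ is complex. For the easy direction, if every component has at most as many edges as vertices, then each component is either a tree or unicyclic; orient a tree away from an arbitrary root, and orient a unicyclic component by orienting its unique cycle cyclically and each attached subtree away from its cycle vertex. In either case every vertex ends up with in-degree at most one. For the other direction, a complex component with $v$ vertices carries at least $v+1$ edges, so in any orientation the sum of in-degrees exceeds $v$, forcing some vertex to have in-degree at least two. Combining the two equivalences yields the lemma.

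I do not anticipate a genuine obstacle: the argument is purely combinatorial and mirrors the proof of Lemma \ref{complexcomp1} almost verbatim. The only subtlety worth a sentence is that the additional last-resort vertices in $L$ do not spoil anything, since each $\ell_a$ is a leaf whose only incident edge $e_a$ can always be oriented toward $\ell_a$ without affecting any in-degree constraint elsewhere in $H$.
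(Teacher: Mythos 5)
Your proposal is correct and follows essentially the same route as the paper, which likewise invokes Mahdian's two-step argument (orientability with in-degree at most one, then the edge-count characterization of complex components) and observes that the degree-one last-resort vertices cause no trouble. You supply slightly more detail on the orientation of trees and unicyclic components than the paper does, but the underlying argument is identical.
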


In contrast to the complete preference lists setting, the top-choice graph in the incomplete preference lists setting has two types of edges (normal edges and last resort edges) with different distributions, and thus cannot be approximated by $G(x,y,z)$ defined in the previous section. Therefore, we have to construct another auxiliary graph $G'(x,y,z_1,z_2)$ as follows.

\begin{definition} \label{defG2}
For integers $x,y,z_1,z_2$, $G'(x,y,z_1,z_2)$ is a bipartite graph with $V \cup U \cup U'$ as a set of vertices, where $V = \{v_1,v_2,...,v_x\}$, $U = \{u_1,u_2,...,u_y\}$, and $U' = \{u'_1, u'_2, ..., u'_{z_1+z_2}\}$. This graph has $z_1+z_2$ edges. Each of the first $z_1$ edges is selected independently and uniformly at random (with replacement) from the set of all possible edges between a vertex in $V$ and a vertex in $U$. Then, each of the next $z_2$ edges is constructed by the following procedures: Uniformly select a vertex $v_i$ from $V$ at random (with replacement); then, uniformly select a vertex $u'_j$ that has not been selected before from $U'$ at random (without replacement) and construct an edge ($v_i, u'_j)$.
\end{definition}

The intuition of $G'(x,y,z_1,z_2)$ is that we imitate the distribution of the top-choice graph in the incomplete preference list setting, with $V$, $U$, and $U'$ correspond to $B'$, $S'$, and $L'$, respectively, and the first $z_1$ edges and the next $z_2$ edges correspond to normal edges and last resort edges, respectively.

Similarly to the complete preference lists setting, this auxiliary graph has properties closely related to the top-choice graph in incomplete preference lists setting, as shown in the following lemma.

\begin{lemma} \label{lemG2}
Suppose that $\alpha = m/n$, the top-choice graph $H$ has $t$ normal edges and $n-t$ last resort edges for a fixed integer $t \leq n$, and $E$ is an arbitrary event defined on graphs. If the probability of $E$ on the random graph $G'(m,h,t,n-t)$ is at most $O(1/n)$ for every fixed integer $h \in [e^{-1/\alpha}m-m^{2/3},e^{-1/\alpha}m+m^{2/3}]$, then the probability of $E$ on the top-choice graph $H$ is at most $O(n^{-1/3})$.
\end{lemma}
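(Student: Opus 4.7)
The plan is to couple the top-choice graph $H$ with the auxiliary graph $G'$ by first concentrating $|S|$ and then matching conditional distributions for each $h$ in the concentrated range. For the concentration step, I would observe that each person's top choice $f(a)$ is the first entry of a uniform random $k$-permutation of $B$, so the $f(a)$'s are i.i.d.\ uniform in $B$; hence $|S|$ counts the items of $B$ not picked when $n$ items are drawn i.i.d.\ uniformly from $B$. Lemma \ref{pick} with $y=n$ and $z=m$ then yields $\mathbb{E}[|S|]=e^{-1/\alpha}m-\Theta(1)$ and $\mathrm{Var}[|S|]<\mathbb{E}[|S|]=O(m)$, so Chebyshev's inequality gives $\Pr[|S|\notin\mathcal{R}] = O(m/m^{4/3}) = O(n^{-1/3})$, where $\mathcal{R}=[e^{-1/\alpha}m-m^{2/3},\,e^{-1/\alpha}m+m^{2/3}]$.

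Next, I would show that for each $h\in\mathcal{R}$, the conditional distribution of $H$ given $\{|S|=h,\,|A_2|=t\}$ coincides, as a labeled bipartite multigraph under the identifications $B\leftrightarrow V$, $S\leftrightarrow U$, $L\leftrightarrow U'$, with the distribution of $G'(m,h,t,n-t)$ conditioned on its $V$-side endpoints having image size exactly $m-h$. This is verified by revealing the preference-list randomness in stages: first the top choices (which determine $F$, $S$, and every $V$-side endpoint); then, for each person, whether $a\in A_1$ or $a\in A_2$; and finally, for each $a\in A_2$, the value $s(a)$, which by the symmetry of the random permutation among items of $S$ is uniform in $S$ and independent across $a\in A_2$. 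The last-resort edges correspond via the bijection $a\leftrightarrow\ell_a$, so that $\{\ell_a:a\in A_1\}$ is a uniformly random size-$(n-t)$ subset of $L$ as required by the without-replacement sampling in $G'$. A separate application of Lemma \ref{pick} to the $n$ i.i.d.\ uniform samples forming the $V$-side of $G'$ shows its image size concentrates around $m-h$, giving $\Pr[\text{image size in }G'=m-h] \geq \Omega(n^{-1/2})$ for $h\in\mathcal{R}$.

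Combining the distributional match with the hypothesis,
\[
\Pr[H\in E \mid |S|=h,\,|A_2|=t] \;\leq\; \frac{\Pr[G'(m,h,t,n-t)\in E]}{\Pr[\text{image size in }G'=m-h]} \;\leq\; \frac{O(1/n)}{\Omega(n^{-1/2})} \;=\; O(n^{-1/2}),
\]
so averaging over $h$ and using the concentration bound from the first step yields
\[
\Pr[H\in E \mid |A_2|=t] \;\leq\; \Pr[|S|\notin\mathcal{R} \mid |A_2|=t] \;+\; \sum_{h\in\mathcal{R}} O(n^{-1/2})\cdot \Pr[|S|=h\mid|A_2|=t] \;=\; O(n^{-1/3}),
\]
as claimed.

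The main obstacle will be the distributional match in the second step. Conditioning simultaneously on $|S|=h$ and $|A_2|=t$ introduces subtle dependencies among the top choices and the partition $A=A_1\cup A_2$, so a careful staged revelation of randomness is needed to verify rigorously that the resulting edge distribution in $H$ really does coincide with that of $G'$ subject to the image-size conditioning; this is where all the non-trivial symmetry arguments are concentrated. Once that match is in place, the remainder is a routine averaging argument.
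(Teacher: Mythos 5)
Your overall architecture matches the paper's: concentrate $|S|$ via Lemma \ref{pick} and Chebyshev, identify the conditional law of $H$ given $|S|=h$ with that of $G'(m,h,t,n-t)$ conditioned on its number of isolated $V$-vertices being $h$, and then average over $h$. However, there is a genuine gap in your third step. You bound $\Pr[H\in E\mid |S|=h]$ by $\Pr[G'\in E]/\Pr[X=h]$ (where $X=m-(\text{image size})$) and assert $\Pr[X=h]=\Omega(n^{-1/2})$ uniformly for $h$ in the window of half-width $m^{2/3}$. That lower bound is false: $X$ has standard deviation $\Theta(\sqrt{m})$, so a pointwise mass of order $n^{-1/2}$ can hold only for $h$ within $O(\sqrt{m})$ of $\mathbb{E}[X]$; for $h$ at distance $\approx m^{2/3}$ from the mean, $\Pr[X=h]$ is at most $\Pr[|X-\mathbb{E}[X]|\ge m^{2/3}]$, which is already $O(m^{-1/3})$ by Chebyshev and in fact exponentially small in $m^{1/3}$ by bounded differences. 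Moreover Lemma \ref{pick} gives only the mean and variance of $X$, so it cannot yield any pointwise lower bound on the probability mass function; you would need a local limit theorem, and even that would not cover the whole window. Shrinking the window to width $O(\sqrt{m})$ does not rescue the argument, since Chebyshev then only gives $\Pr[|S|\notin\text{window}]=O(1)$, not $o(1)$.

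The repair is to avoid the division altogether, which is exactly what the paper does: write
\[
\Pr_H[E\mid |S|=h]\cdot\Pr_H[|S|=h]=\Pr_{G'(m,h,t,n-t)}[E\wedge X=h]=\Pr_{G'}[X=h\mid E]\cdot\Pr_{G'}[E]\le \Pr_{G'}[E]=O(1/n),
\]
and then sum this over the $O(m^{2/3})$ values of $h$ in the window, contributing $O(m^{2/3}/n)=O(n^{-1/3})$, while the out-of-window values contribute $\Pr[|X-\mathbb{E}[X]|>\delta]=O(n^{-1/3})$ by Chebyshev. With that substitution your argument goes through; the rest of your proposal (the staged revelation of randomness establishing the distributional match, and the identification of the last-resort edges with the without-replacement sampling into $U'$) is consistent with what the paper asserts, and is in fact spelled out in more detail than the paper provides.
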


\begin{proof}
Using the same technique as in Mahdian's proof of \cite[Lemma 3]{mahdian}, let a random variable $X$ be the number of isolated vertices (zero-degree vertices) in part $V$ (the part that has $m$ vertices) of $G'(m,h,t,n-t)$. By the definition of $G'(m,h,t,n-t)$, for each fixed value of $h$, the distribution of $H$ conditioned on $|S'| = h$ is the same as the distribution of $G'(m,h,t,n-t)$ conditioned on $X = h$ (because $|S| = |S'| = h$ means that part $B'$ of $H$ has exactly $h$ isolated vertices which correspond to the vertices in $S$). Also, from Lemma \ref{pick} with $y = n$ and $z = m$, we have $\mathbb{E}[X] = e^{-1/\alpha}m - \Theta(1)$ and $\text{Var}[X] < \mathbb{E}[X]$. Let $\delta = \frac{1}{2}m^{2/3}$, and let $I = [E[X]-\delta, E[X]+\delta]$. We have $I \subseteq [e^{-1/\alpha}m-m^{2/3},e^{-1/\alpha}m+m^{2/3}]$ for large enough $m$. Therefore,
\begin{align*}
\Pr_{H}[E] &= \sum_h \Pr_{H}\left[E \big| |S|=h\right] \cdot \Pr_{H}[|S|=h] \\
&= \sum_h \Pr_{G'(m,h,t,n-t)}[E|X=h] \cdot \Pr_{G'(m,h,t,n-t)}[X=h] \\
&= \sum_h \Pr_{G'(m,h,t,n-t)}[X=h|E] \cdot \Pr_{G'(m,h,t,n-t)}[E] \\
&\leq \Pr[|X-\mathbb{E}[X]| > \delta] + \sum_{h \in I} \Pr_{G'(m,h,t,n-t)}[X=h|E] \cdot \Pr_{G'(m,h,t,n-t)}[E] \\
&\leq \Pr[|X-\mathbb{E}[X]| > \delta] + \sum_{h \in I} \Pr_{G'(m,h,t,n-t)}[E].
\end{align*}

From Chebyshev's inequality, we have
\begin{align*}
\Pr_{H}[E] &\leq \frac{\text{Var}[X]}{\delta^2} + \sum_{h \in I} \Pr_{G'(m,h,t,n-t)}[E] \\
&\leq \frac{\mathbb{E}[X]}{\delta^2} + 2\delta \max_{h \in I} \Pr_{G'(m,h,t,n-t)}[E] \\
&< \frac{O(m)}{m^{4/3}} + m^{2/3}O\left(\frac{1}{n}\right) \\
&= O(n^{-1/3})
\end{align*}
as desired.
\end{proof}

\subsection{Size of \boldmath{$A_2$}}

Since our top-choice graph has two types of edges with different distributions, we first want to bound the number of each type of edges. Note that the top-choice graph has $|A_2|$ normal edges and $|A_1|$ last resort edges, so the problem is equivalent to bounding the size of $A_2$.

First, we will prove the next two lemmas, which will be used to bound the ratio $\frac{|A_2|}{n}$.

\begin{lemma} \label{beta1}
In a random instance with strict and $k$-incomplete preference lists,
$$1-e^{-1/\alpha}-c_1 < \frac{|F|}{m} < 1-e^{-1/\alpha}+c_1$$
with probability $1-o(1)$ for any constant $c_1 > 0$.
\end{lemma}

\begin{proof}
Let $c_1 > 0$ be any constant. From Lemma \ref{pick} with $y = n$ and $z = m$, we have 
\begin{align}
\mathbb{E}[|F|] &= m-\mathbb{E}[|S|] = (1-e^{-1/\alpha})m + \Theta(1); \label{ineq3} \\
\text{Var}(|F|) &= \text{Var}(|S|) < \mathbb{E}[|S|] \leq \frac{e^{-1/\alpha}}{1-e^{-1/\alpha}}\mathbb{E}[|F|]. \nonumber
\end{align}
From Chebyshev's inequality, we have
\begin{align}
\Pr\left[\big||F|-\mathbb{E}[|F|]\big| \geq c_1 \cdot \mathbb{E}[|F|]\right] &\leq \frac{\text{Var}[|F|]}{\left(c_1 \cdot \mathbb{E}[|F|]\right)^2} \nonumber \\
&< \frac{e^{-1/\alpha}}{c_1^2(1-e^{-1/\alpha}) \mathbb{E}[|F|]} = O(1/n). \label{ineq3.5}
\end{align}
Therefore, from (\ref{ineq3}) and (\ref{ineq3.5}) we can conclude that
$$1-e^{-1/\alpha}-c_1 < \frac{|F|}{m} < 1-e^{-1/\alpha}+c_1$$
with probability $1-o(1)$ for sufficiently large $m$.
\end{proof}

\begin{lemma} \label{beta2}
In a random instance with strict and $k$-incomplete preference lists,
$$1-(1-e^{-1/\alpha})^{k-1}-c_2 < \Pr[a \in A_2] < 1-(1-e^{-1/\alpha})^{k-1}+c_2$$
holds for any $a \in A$ for sufficiently large $m$, given any constant $c_2 > 0$.
\end{lemma}

\begin{proof}
If $k = 1$, then we have $P_a \subseteq F$ for every $a \in A$, which means $\Pr[a \in A_2] = 0$ and thus the lemma holds. From now on, we will consider the case that $k \geq 2$.

Let $c_2 > 0$ be any constant. We can select a sufficiently small $c_1$ (e.g. $c_1 = \frac{c_2}{(k-1)(c_2+2)}$, where the proof is given in Appendix \ref{Appendix-1}) such that
\begin{align} 
(1-e^{-1/\alpha} - c_1)^{k-1} &> (1-e^{-1/\alpha})^{k-1} - \frac{c_2}{2}; \label{ineq1} \\
(1-e^{-1/\alpha} + c_1)^{k-1} &< (1-e^{-1/\alpha})^{k-1} + \frac{c_2}{2}, \label{ineq2}.
\end{align}

Let $I = [(1-e^{-1/\alpha}-c_1)m, (1-e^{-1/\alpha}+c_1)m]$. From Lemma \ref{beta1}, $|F| \in I$ with probability $1-o(1)$ for sufficiently large $m$.

Note that $a \in A_1$ if and only if $P_a-\{f(a)\} \subseteq F$. Consider the process that we first independently and uniformly select the first-choice item of every person in $A$ from the set $B$ at random, creating the set $F$. Suppose that $|F| = q$ for some fixed integer $q \in I$. Then, for each $a \in A$, we uniformly select the remaining $k-1$ items in $a$'s preference list one by one from the remaining $m-1$ items in $B-\{f(a)\}$ at random. Among the $(k-1)! \binom{m-1}{k-1}$ possible ways of selection, there are $(k-1)! \binom{q-1}{k-1}$ ways such that $P_a-\{f(a)\} \subseteq F$, so
\begin{align*}
\Pr\left[a \in A_1 \big| |F| = q\right] &= \Pr\left[P_a-\{f(a)\} \subseteq F \big| |F| = q\right] \\
&= \frac{(k-1)! \binom{q-1}{k-1}}{(k-1)! \binom{m-1}{k-1}} \\
&= \frac{\binom{q-1}{k-1}}{\binom{m-1}{k-1}}.
\end{align*}

Since $\binom{q-1}{k-1}/\binom{m-1}{k-1}$ converges to $\left(\frac{q}{m}\right)^{k-1}$ when $m$ increases to infinity for every $q \in I$, it is sufficient to consider $\Pr\left[a \in A_1 \big| |F| = q\right] = \left(\frac{q}{m}\right)^{k-1}$. 

Now consider
\begin{align*}
\Pr[a \in A_1] &= \sum_q \Pr[|F| = q] \cdot \Pr\left[a \in A_1 \big| |F| = q\right] \\
&= \sum_{q \in I} \Pr[|F| = q] \cdot \Pr\left[a \in A_1 \big| |F| = q\right] \\
&~~~~~~~~~~+ \sum_{q \notin I} \Pr[|F| = q] \cdot \Pr\left[a \in A_1 \big| |F| = q\right].
\end{align*}

For the lower bound of $\Pr[a \in A_1]$, we have
\begin{align*}
\Pr[a \in A_1] &\geq \sum_{q \in I} \Pr[|F| = q] \cdot \Pr\left[a \in A_1 \big| |F| = q\right] \\
&= \sum_{q \in I} \Pr[|F| = q] \cdot \left(\frac{q}{m}\right)^{k-1} \\
&\geq \sum_{q \in I} \Pr[|F| = q] \cdot (1-e^{-1/\alpha}-c_1)^{k-1} \\
&= \Pr[|F| \in I] \cdot (1-e^{-1/\alpha}-c_1)^{k-1} \\
&> (1-o(1))\left((1-e^{-1/\alpha})^{k-1} - \frac{c_2}{2}\right),
\end{align*}
where the last inequality follows from (\ref{ineq1}). Thus, we can conclude that $\Pr[a \in A_1] > (1-e^{-1/\alpha})^{k-1} - c_2$ for sufficiently large $m$. On the other hand, for the upper bound of $\Pr[a \in A_1]$, we have
\begin{align*}
\Pr[a \in A_1] &\leq \sum_{q \in I} \Pr[|F| = q] \cdot \Pr\left[a \in A_1 \big| |F| = q\right] 
+ \sum_{q \notin I} \Pr[|F| = q] \\
&= \sum_{q \in I} \Pr[|F| = q] \cdot \left(\frac{q}{m}\right)^{k-1} + o(1) \\
&\leq \sum_{q \in I} \Pr[|F| = q] \cdot (1-e^{-1/\alpha}+c_1)^{k-1} + o(1) \\
&= \Pr[|F| \in I] \cdot (1-e^{-1/\alpha}+c_1)^{k-1} + o(1) \\
&< (1-o(1))\left((1-e^{-1/\alpha})^{k-1} + \frac{c_2}{2}\right) + o(1),
\end{align*}
where the last inequality follows from (\ref{ineq2}). Thus, we can conclude that $\Pr[a \in A_1] < (1-e^{-1/\alpha})^{k-1} + c_2$ for sufficiently large $m$.

Therefore,
$$(1-e^{-1/\alpha})^{k-1}-c_2 < \Pr[a \in A_1] < (1-e^{-1/\alpha})^{k-1}+c_2,$$
which is equivalent to
$$1-(1-e^{-1/\alpha})^{k-1}-c_2 < \Pr[a \in A_2] < 1-(1-e^{-1/\alpha})^{k-1}+c_2.$$
\end{proof}

Finally, the following lemma shows that the ratio $\frac{|A_2|}{n}$ lies around a constant $1-(1-e^{-1/\alpha})^{k-1}$ with high probability.

\begin{lemma} \label{beta}
In a random instance with strict and $k$-incomplete preference lists,
$$1-(1-e^{-1/\alpha})^{k-1}-c_3 < \frac{|A_2|}{n} < 1-(1-e^{-1/\alpha})^{k-1}+c_3$$
with probability $1-o(1)$ for any constant $c_3 > 0$.
\end{lemma}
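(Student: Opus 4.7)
The plan is to compute $\mathbb{E}[|A_2|]$ directly and then obtain concentration via Chebyshev's inequality. The target constant $1-(1-e^{-1/\alpha})^{k-1}$ has a natural heuristic: by Lemma \ref{pick} applied to the $n$ top-choice draws, $|F|$ concentrates around $(1-e^{-1/\alpha})m$, so a random item lies in $F$ with probability close to $1-e^{-1/\alpha}$; person $a$ fails to belong to $A_2$ iff the $k-1$ items after $f(a)$ on $a$'s list all land in $F$, which -- if those events were independent -- would happen with probability $(1-e^{-1/\alpha})^{k-1}$.

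For the expectation, I would fix a single $a \in A$, set $F_{-a} = \{f(a') : a' \in A \setminus \{a\}\}$, and condition on $|F_{-a}| = h'$. Since $f(a)$ is uniform on $B$ and the remaining $k-1$ items of $P_a$ form a uniform $(k-1)$-subset of $B \setminus \{f(a)\}$, a short calculation (splitting on whether $f(a) \in F_{-a}$) gives
\begin{equation*}
\Pr[a \in A_1 \mid |F_{-a}| = h'] = \frac{h'}{m} \cdot \frac{\binom{h'-1}{k-1}}{\binom{m-1}{k-1}} + \frac{m-h'}{m} \cdot \frac{\binom{h'}{k-1}}{\binom{m-1}{k-1}}.
\end{equation*}
Applying Lemma \ref{pick} with $y=n-1$, $z=m$ places $|F_{-a}|/m$ within $o(1)$ of $1-e^{-1/\alpha}$ on an event of probability $1-o(1)$, and on that event the right-hand side above tends to $(1-e^{-1/\alpha})^{k-1}$ as $m \to \infty$. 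Summing over $a$ then yields $\mathbb{E}[|A_2|] = n\bigl(1-(1-e^{-1/\alpha})^{k-1}\bigr) + o(n)$.

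For concentration, let $X_a = \mathbf{1}[a \in A_2]$, and I would bound $\text{Cov}(X_a, X_{a'})$ for $a \neq a'$ via the law of total covariance, conditioning on $G := \{f(a'') : a'' \notin \{a,a'\}\}$. Given $G$, the lists $P_a$ and $P_{a'}$ are independent, and the residual coupling -- whether $f(a')$ happens to complete $P_a \setminus \{f(a)\}$ into $F$, and symmetrically -- only shifts probabilities by $O(1/m)$, so $\mathbb{E}[\text{Cov}(X_a, X_{a'} \mid G)] = O(1/n)$. The outer term $\text{Cov}(\mathbb{E}[X_a \mid G],\mathbb{E}[X_{a'} \mid G])$ is also $O(1/n)$ because $\mathbb{E}[X_a \mid G]$ depends on $G$ only through $|G|$, is $O(1/m)$-Lipschitz in $|G|$, and $\text{Var}(|G|) = O(m)$ by Lemma \ref{pick}. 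Hence $\text{Var}(|A_2|) \le n + n(n-1)\cdot O(1/n) = O(n)$, and Chebyshev gives $\Pr[\,||A_2|-\mathbb{E}[|A_2|]| > cn/2\,] = O(1/n)$ for any fixed $c > 0$, which combined with the mean estimate closes the lemma.

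The main obstacle will be the covariance step: the events $\{a \in A_2\}$ are global functions of $F$, so I must verify carefully that conditional on $G$ the only dependence between $X_a$ and $X_{a'}$ flows through the single items $f(a)$ and $f(a')$, and that this dependence is of order $1/m$. The computation is elementary but has enough moving parts -- distinctness of the $k$ items in a list and binomial coefficients with slightly shifted parameters -- that one must be deliberate. Everything else (concentration of $|G|$, Chebyshev) is immediate from Lemma \ref{pick}.
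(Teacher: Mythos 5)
Your proposal is correct and follows the same overall strategy as the paper: use Lemma \ref{pick} to concentrate the size of the first-choice set, compute $\Pr[a \in A_1]$ conditionally as a ratio of binomial coefficients converging to $(1-e^{-1/\alpha})^{k-1}$, and finish with a second-moment bound and Chebyshev. The one place where you genuinely diverge is the variance step, and there your version is the more careful one. The paper simply asserts that the indicators $X_a = \mathbf{1}[a \in A_2]$ are pairwise independent and concludes $\mathrm{Var}[|A_2|] \le \mathbb{E}[|A_2|]$; that independence claim is not literally true, since all the $X_a$ are coupled through the common random set $F$ (conditioning on $P_a \subseteq F$ biases $F$ upward and hence biases $X_b$). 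Your decomposition via the law of total covariance, conditioning on the first choices of everyone outside $\{a,a'\}$, is exactly the right repair: the inner covariance is $O(1/m)$ because $X_a$ sees $f(a')$ only through an event of probability $O(k/m)$, and the outer covariance is $O(1/m)$ because $\mathbb{E}[X_a \mid G]$ is an $O(1/m)$-Lipschitz function of $|G|$ whose variance is $O(m)$ by Lemma \ref{pick}. This still yields $\mathrm{Var}[|A_2|] = O(n)$, which is all that Chebyshev needs at deviation scale $cn/2$, so both arguments reach the same conclusion; yours costs an extra covariance computation but does not rely on an independence assertion that fails, while the paper's is shorter but needs this small patch to be fully rigorous. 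Your expectation computation (conditioning on $F_{-a}$ and splitting on whether $f(a) \in F_{-a}$) is also a minor variant of the paper's conditioning on $|F|=q$; both are valid.
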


\begin{proof}
If $k = 1$, then we have $P_a \subseteq F$ for every $a \in A$, which means $|A_2| = 0$ and thus the lemma holds. From now on, we will consider the case that $k \geq 2$.

Let $c_3 > 0$ be any constant. We can select a sufficiently small $c_2$ such that $c_2(1+(1-e^{-1/\alpha})^{k-1}+c_2) < c_3$ and thus
\begin{align} 
(1-c_2)\left((1-e^{-1/\alpha})^{k-1}-c_2\right) &> (1-e^{-1/\alpha})^{k-1} - c_3; \label{ineq1a} \\
(1+c_2)\left((1-e^{-1/\alpha})^{k-1}+c_2\right) &< (1-e^{-1/\alpha})^{k-1} + c_3; \label{ineq2a}
\end{align}
From Lemma \ref{beta2}, we have
\begin{equation} \label{ineq4}
1-(1-e^{-1/\alpha})^{k-1}-c_2 < \Pr[a \in A_2] < 1-(1-e^{-1/\alpha})^{k-1}+c_2
\end{equation}
for sufficiently large $m$.

For each $a \in A$, define an indicator random variable $X_a$ such that
$$
X_a = \begin{cases}
1, &\text{for } a \in A_2; \\
0, &\text{for } a \notin A_2.
\end{cases}
$$
Note that $|A_2| = \sum_{a \in A}X_a$. From (\ref{ineq4}), we have
$$1-(1-e^{-1/\alpha})^{k-1}-c_2 < \mathbb{E}[X_a] < 1-(1-e^{-1/\alpha})^{k-1}+c_2$$
for each $a \in A$, and from the linearity of expectation we also have
\begin{equation} \label{ineq5}
\left(1-(1-e^{-1/\alpha})^{k-1}-c_2\right)n < \mathbb{E}[|A_2|] < \left(1-(1-e^{-1/\alpha})^{k-1}+c_2\right)n.
\end{equation}

Since $X_a$ and $X_{a'}$ are independent for any pair of distinct $a, a' \in A$, we have
\begin{align*}
\text{Var}[|A_2|] &= \sum_{a \in A} \text{Var}[X_a] = \sum_{a \in A} \left(\mathbb{E}[X_a^2]-\mathbb{E}[X_a]^2\right) \\
&\leq \sum_{a \in A} \mathbb{E}[X_a^2] = \sum_{a \in A} \mathbb{E}[X_a] = \mathbb{E}[A_2].
\end{align*}
Then, from Chebyshev's inequality and (\ref{ineq5}) we have
$$\Pr\left[\big||A_2|-\mathbb{E}[|A_2|]\big| \geq c_2 \cdot \mathbb{E}[|A_2|]\right]
\leq \frac{\text{Var}[|A_2|]}{\left(c_2 \cdot \mathbb{E}[|A_2|]\right)^2}
\leq \frac{1}{c_2^2 \cdot \mathbb{E}[|A_2|]}
= O(1/n).$$
This implies $(1-c_2)\mathbb{E}[|A_2|] \leq |A_2| \leq (1+c_2)\mathbb{E}[|A_2|]$ with probability $1-O(1/n) = 1-o(1)$. Therefore, from (\ref{ineq1a}), (\ref{ineq2a}), and (\ref{ineq5}) we can conclude that
$$1-(1-e^{-1/\alpha})^{k-1}-c_3 < \frac{|A_2|}{n} < 1-(1-e^{-1/\alpha})^{k-1}+c_3$$
with probability $1-o(1)$
\end{proof}

\section{Main Results}

For each value of $k$, we want to find a phase transition point $\alpha_k$ such that if $\alpha > \alpha_k$, then a popular matching exists with high probability; and if $\alpha < \alpha_k$, then a popular matching exists with low probability. We do so by proving the upper bound and lower bound separately.

\subsection{Upper Bound}

\begin{lemma} \label{lemupperbound}
Suppose that $\alpha = m/n$ and $0 \leq \beta < \alpha e^{-1/2\alpha}$. Then, $G'(m,h,\beta n,(1-\beta)n)$ contains a complex component with probability $O(1/n)$ for every fixed integer $h \in [e^{-1/\alpha}m-m^{2/3},e^{-1/\alpha}m+m^{2/3}]$.
\end{lemma}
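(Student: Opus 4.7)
The plan is to reduce the problem to Mahdian's random bipartite multigraph model $G(m,h,\cdot)$, and then rerun the first-moment argument underlying Theorem~\ref{thmmahdian1} with the edge count $\beta n$ in place of $n$.

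The first step is a structural observation. In $G'(m,h,\beta n,(1-\beta)n)$ the $U'$-endpoints of the $(1-\beta)n$ last-resort edges are sampled from $U'$ without replacement, so every vertex of $U'$ has degree at most one. Each last-resort edge is therefore a pendant; it lies on no cycle, and attaching it to a $V$-vertex only adds a leaf to that vertex's component. Consequently $G'$ has a complex component if and only if its normal-edge sub-multigraph $H$ on $V\cup U$ does, and by Definition~\ref{defG2} that sub-multigraph is distributed exactly as $G(m,h,\beta n)$.

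The second step is a first-moment calculation on $G(m,h,\beta n)$. Every complex component contains a \emph{minimal bicyclic subgraph}, which in a bipartite multigraph must be of one of three topological types: a theta graph (two vertices joined by three internally vertex-disjoint paths), a figure-eight (two even cycles sharing a single vertex), or a handcuff (two vertex-disjoint even cycles connected by a path); double edges are interpreted as cycles of length $2$. A standard count in $K_{V,U}$ shows that any such subgraph on $v_1$ vertices in $V$ and $v_2$ in $U$ (so $v := v_1+v_2$) with $e = v+1$ edges appears in $G(m,h,\beta n)$ with expected multiplicity
\[
\Theta\!\Bigl(m^{v_1}h^{v_2}\bigl(\beta n/(mh)\bigr)^{e}\Bigr).
\]
Setting $\theta := (\beta n)^{2}/(mh)$ and using $m=\alpha n$, the above expression is $\Theta\!\bigl(\theta^{(v+1)/2}/n\bigr)$. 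For $h$ in the stated interval,
\[
\theta \;=\; \frac{\beta^{2}}{\alpha^{2}e^{-1/\alpha}}\bigl(1+o(1)\bigr),
\]
and the hypothesis $\beta<\alpha e^{-1/(2\alpha)}$ is exactly $\theta\le 1-\Omega(1)$.

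With $\theta$ bounded strictly below $1$, summation over the free length parameters in each topology — the two cycle lengths in every case, plus the handle length in the handcuff case — amounts to a convergent geometric series in $\theta^{1/2}$, multiplied by the common factor $1/n$. Aggregating over the three topologies yields $\mathbb{E}[\#\text{minimal bicyclic subgraphs}]=O(1/n)$, uniformly in $h$ over the stated interval, and Markov's inequality completes the proof. The main obstacle is the bookkeeping in the three enumerations, especially the handcuff case with its additional handle-length parameter, and verifying that the $o(1)$ slack in $mh/n^{2}$ permitted by the interval for $h$ does not threaten the inequality $\theta<1$; once these details are checked, the argument is essentially a line-by-line adaptation of Mahdian's proof of Theorem~\ref{thmmahdian1} with $n$ replaced by $\beta n$.
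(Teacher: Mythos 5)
Your proposal follows essentially the same route as the paper's proof: first strip the degree-one $U'$ pendants to reduce $G'(m,h,\beta n,(1-\beta)n)$ to $G(m,h,\beta n)$, then run a first-moment/union-bound count over minimal bicyclic ("bad") subgraphs whose expected number is a geometric series in $\bigl(\beta^2/(\alpha^2 e^{-1/\alpha})\bigr)^{1/2}$, convergent precisely because $\beta<\alpha e^{-1/2\alpha}$. The only cosmetic difference is your three-way classification of the bicyclic cores versus the paper's two-way one; the calculation and the resulting $O(1/n)$ bound are identical.
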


\begin{proof}
By the definition of $G'(m,h,\beta n,(1-\beta)n)$, each vertex in $U'$ has degree at most one, thus removing $U'$ does not affect the existence of a complex component. Moreover, the graph $G'(m,h,\beta n,(1-\beta)n)$ with part $U'$ removed has exactly the same distribution as $G(m,h,\beta n)$ given in Definition \ref{defG1}. Therefore, it is sufficient to consider the graph $G(m,h,\beta n)$ instead.

Using the same technique as in Mahdian's proof of \cite[Lemma 4]{mahdian}, define a \textit{minimal bad graph} to be two vertices joined by three vertex-disjoint paths, or two vertex-disjoint cycles joined by a path which is also vertex-disjoint from the two cycles except at both endpoints (the path can be degenerate, which is the only exception that the two cycles share a vertex). Note that any proper subgraph of a minimal bad graph does not contain a complex component, and every graph that contains a complex component must contain a minimal bad graph as a subgraph.

Let $X$ and $Y$ be subsets of vertices of $G(m,h,\beta n)$ in $V$ and $U$, respectively. Define $BAD_{X,Y}$ to be an event that $X \cup Y$ contains a minimal bad graph as a \textit{spanning} subgraph. Then, let $p_1 = |X|$, $p_2 = |Y|$, and $p = p_1+p_2$. Observe that $BAD_{X,Y}$ can occur only when $|p_1 - p_2| \leq 1$, so $p_1,p_2 \geq \frac{p-1}{2}$. Also, there are at most $2p^2$ non-isomorphic minimal bad graphs with $p_1$ vertices in $V$ and $p_2$ vertices in $U$, with each of them having $p_1!p_2!$ ways to arrange the vertices, and there are at most $(p+1)!\binom{\beta n}{p+1}\left(\frac{1}{mh}\right)^{p+1}$ probability that all $p+1$ edges of each graph are selected in our random procedure. By the union bound, the probability of $BAD_{X,Y}$ is at most
$$2p^2p_1!p_2!(p+1)!\binom{\beta n}{p+1}\left(\frac{1}{mh}\right)^{p+1}
\leq 2p^2p_1!p_2!\left(\frac{\beta n}{mh}\right)^{p+1}.$$

Again, by the union bound, the probability that at least one $BAD_{X,Y}$ occurs is at most
\begin{align*}
\Pr\left[\bigvee_{X,Y} BAD_{X,Y}\right] &\leq \sum_{p_1,p_2} \binom{m}{p_1} \binom{h}{p_2} 2p^2p_1!p_2!\left(\frac{\beta n}{mh}\right)^{p+1} \\
&\leq \sum_{p_1,p_2} \frac{m^{p_1}}{p_1!} \cdot \frac{h^{p_2}}{p_2!} \cdot 2p^2p_1!p_2!\left(\frac{\beta}{\alpha h}\right)^{p+1} \\
&= \sum_{p_1,p_2} \frac{2p^2}{h} \left(\frac{\beta}{\alpha}\right)^{p+1} \left(\frac{m}{h}\right)^{p_1} \\
&\leq \sum_{p=1}^{\infty} \frac{O(p^2)}{n} \left(\frac{\beta}{\alpha}\right)^p \left(e^{-1/\alpha}-m^{-1/3}\right)^{-p/2} \\
&= \frac{O(1)}{n} \sum_{p=1}^{\infty} p^2\left(\frac{\alpha^2}{\beta^2}\left(e^{-1/\alpha}-m^{-1/3}\right)\right)^{-p/2}.
\end{align*}

By the assumption, we have $\alpha^2 e^{-1/\alpha} > \beta^2$, so $\frac{\alpha^2}{\beta^2}(e^{-1/\alpha}-m^{-1/3}) > 1$ for sufficiently large $m$, hence the above sum converges. Therefore, the probability that at least one $BAD_{X,Y}$ happens is at most $O(1/n)$.
\end{proof}

We can now prove the following theorem, which serves as an upper bound of $\alpha_k$.

\begin{theorem} \label{thmupperbound}
In a random instance with strict and $k$-incomplete preference lists, if $\alpha e^{-1/2\alpha} > 1-(1-e^{-1/\alpha})^{k-1}$, then a popular matching exists with probability $1-o(1)$.
\end{theorem}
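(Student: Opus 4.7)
The strategy is a three-layer chaining of the lemmas already proved. The final conclusion we need is that the top-choice graph $H$ contains no complex component with probability $1-o(1)$, because by Lemmas \ref{pmexists} and \ref{complexcomp2} this is equivalent to the existence of a popular matching. So the whole proof reduces to pushing an ``absence of complex component'' bound from the idealized graph $G'(m,h,t,n-t)$ up to $H$, while controlling the parameter $t=|A_2|$.

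First, I would exploit the strict inequality in the hypothesis. Set $\beta^{*}=1-(1-e^{-1/\alpha})^{k-1}$ and pick a constant $c>0$ small enough that $\beta^{*}+c<\alpha e^{-1/2\alpha}$; this is possible precisely because of the assumption $\alpha e^{-1/2\alpha}>\beta^{*}$. Lemma \ref{beta} then tells us that the event $T:=\{|A_2|\le(\beta^{*}+c)n\}$ holds with probability $1-o(1)$. This is the step that turns the average-case estimate on $|A_2|$ into a uniform cap on the number of normal edges.

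Next I would condition on $|A_2|=t$ for each fixed value of $t$ compatible with $T$. On such a value, $t/n<\alpha e^{-1/2\alpha}$, so Lemma \ref{lemupperbound} (applied with $\beta=t/n$) gives that $G'(m,h,t,n-t)$ contains a complex component with probability $O(1/n)$ for every fixed $h\in[e^{-1/\alpha}m-m^{2/3},\,e^{-1/\alpha}m+m^{2/3}]$. Lemma \ref{lemG2} then transfers this bound to $H$: conditional on $|A_2|=t$, the top-choice graph $H$ contains a complex component with probability at most $O(n^{-1/3})$.

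Finally I would assemble the bounds by conditioning. Letting $E$ denote the event that $H$ contains a complex component,
\begin{align*}
\Pr[E] &\le \Pr[\neg T]+\sum_{t\le(\beta^{*}+c)n}\Pr[E\mid |A_2|=t]\Pr[|A_2|=t]\\
&\le o(1)+O(n^{-1/3})=o(1).
\end{align*}
By Lemma \ref{complexcomp2} an $A$-perfect matching exists with probability $1-o(1)$, and by Lemma \ref{pmexists} a popular matching exists with probability $1-o(1)$, as claimed. There is no serious obstacle here; the only subtlety is the conditioning on $t$ so that Lemma \ref{lemG2} can be invoked (which requires $t$ fixed), and this is handled by the decomposition above. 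The real work has already been done in Lemmas \ref{beta}, \ref{lemG2}, and \ref{lemupperbound}, whose constants were chosen exactly so that the threshold $\alpha e^{-1/2\alpha}=1-(1-e^{-1/\alpha})^{k-1}$ appears naturally at this step.
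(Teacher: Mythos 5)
Your proposal is correct and follows essentially the same route as the paper: control $|A_2|$ via Lemma \ref{beta}, condition on each fixed $t$ in the admissible range so that $\beta = t/n < \alpha e^{-1/2\alpha}$, chain Lemmas \ref{lemupperbound} and \ref{lemG2} to bound the probability of a complex component in $H$ by $O(n^{-1/3})$, and conclude via Lemmas \ref{pmexists} and \ref{complexcomp2}. The only cosmetic difference is that you use a one-sided cap on $|A_2|$ where the paper uses the two-sided interval $J_1$, which is immaterial since Lemma \ref{lemupperbound} only requires $0 \leq \beta < \alpha e^{-1/2\alpha}$.
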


\begin{proof}
Since $\alpha e^{-1/2\alpha} > 1-(1-e^{-1/\alpha})^{k-1}$, we can select a small enough $\delta_1 > 0$ such that $\alpha e^{-1/2\alpha} > 1-(1-e^{-1/\alpha})^{k-1}+\delta_1$. Let $J_1 = [(1-(1-e^{-1/\alpha})^{k-1}-\delta_1)n, (1-(1-e^{-1/\alpha})^{k-1}+\delta_1)n]$. From Lemma \ref{beta}, $|A_2| \in J_1$ with probability $1-o(1)$. Moreover, we have $\beta = \frac{t}{n} < \alpha e^{-1/2\alpha}$ for any integer $t \in J_1$.

Define $E_1$ to be an event that a popular matching exists in a random instance. First, consider the probability of $E_1$ conditioned on $|A_2| = t$ for each fixed integer $t \in J_1$. By Lemmas \ref{lemG2} and \ref{lemupperbound}, the top-choice graph contains a complex component with probability $O(n^{-1/3}) = o(1)$. Therefore, from Lemmas \ref{pmexists} and \ref{complexcomp2} we can conclude that a popular matching exists with probability $1-o(1)$, i.e. $\Pr\left[E_1 \big| |A_2| = t\right] = 1-o(1)$ for every fixed integer $t \in J_1$. So 
\begin{align*}
\Pr[E_1] &= \sum_t \Pr[|A_2| = t] \cdot \Pr\left[E_1\big||A_2| = t\right] \\
&\geq \sum_{t \in J_1} \Pr[|A_2| = t] \cdot \Pr\left[E_1\big||A_2| = t\right] \\
&\geq \Pr[|A_2| \in J_1] \cdot (1-o(1)) \\
&= (1-o(1))(1-o(1)) \\
&= 1-o(1).
\end{align*}

Hence, a popular matching exists with probability $1-o(1)$.
\end{proof}

\subsection{Lower Bound}
\begin{lemma} \label{lemlowerbound}
Suppose that $\alpha = m/n$ and $\alpha e^{-1/2\alpha} < \beta \leq 1$. Then, $G'(m,h,\beta n,(1-\beta)n)$ does not contain a complex component with probability $O(1/n)$ for every fixed integer $h \in [e^{-1/\alpha}m-m^{2/3},e^{-1/\alpha}m+m^{2/3}]$.
\end{lemma}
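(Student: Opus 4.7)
The plan starts by observing that each of the $(1-\beta)n$ last-resort edges of $G'(m,h,\beta n,(1-\beta)n)$ is incident to a distinct vertex of $U'$, and every such vertex therefore has degree at most one. No last-resort edge can lie on any cycle, so deleting $U'$ together with its incident edges preserves every cycle and hence the existence of a complex component. The remaining subgraph on $V\cup U$ has the distribution of $G(m,h,\beta n)$, exactly as in the proof of Lemma~\ref{lemupperbound}. The task therefore reduces to showing $\Pr[G(m,h,\beta n)\text{ contains no complex component}]=O(1/n)$.

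\textbf{Supercriticality.} Writing $q:=h/m=e^{-1/\alpha}+O(m^{-1/3})$, the hypothesis $\beta>\alpha e^{-1/(2\alpha)}$ rewrites as $\mu:=\beta^2/(\alpha^2 q)>1+\Omega(1)$. Here $\mu$ is precisely the product of the mean degrees $d_V=\beta/\alpha$ and $d_U=\beta/(\alpha q)$ on the two sides of $G(m,h,\beta n)$, and $\mu>1$ is the bipartite analogue of the $c>1$ supercriticality threshold for $G(n,c/n)$. In this regime one expects a giant component containing many cycles, and the strategy is to turn this intuition into a quantitative probability bound.

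\textbf{Second moment plan.} The plan is to apply the second moment method to the same family of minimal complex-component witnesses used in Lemma~\ref{lemupperbound}, namely two vertex-disjoint cycles joined by a path, or two vertices joined by three internally disjoint paths. Let $X$ count bad subgraphs with exactly $p+1$ edges. Mirroring the counting in Lemma~\ref{lemupperbound} but keeping leading-order contributions instead of geometric upper bounds, one gets $\mathbb{E}[X]=\Theta(\mu^{(p+1)/2}/n)$ per topological type; taking $p=c\log n$ with $c$ sufficiently large (using $\mu>1$) then yields $\mathbb{E}[X]=\Omega(n)$. For the second moment, expand $\mathbb{E}[X^2]$ as a sum over ordered pairs $(S_1,S_2)$ of bad subgraphs and partition by overlap pattern. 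Pairs that are vertex-disjoint behave almost independently under the with-replacement edge model and contribute $(1+o(1))\mathbb{E}[X]^2$. Pairs that share vertices or edges are handled by a case analysis over the number of shared vertices and edges, and their total contribution is shown to be $o(\mathbb{E}[X]^2)$ using the boundedness of $d_V$ and $d_U$. Chebyshev's inequality then gives $\Pr[X=0]\le\mathrm{Var}[X]/\mathbb{E}[X]^2=O(1/\mathbb{E}[X])=O(1/n)$, and since the absence of a complex component implies $X=0$, the lemma follows.

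\textbf{Main obstacle.} The principal technical burden lies in the variance estimate. Because the chosen bad subgraphs have size $\Theta(\log n)$, the enumeration of overlap patterns is more delicate than in a fixed-size second moment argument: two such subgraphs can share a vertex, an edge, an internal path, or an entire cycle, each producing a distinct combinatorial factor, and one must show the sum of these contributions is asymptotically negligible compared with $\mathbb{E}[X]^2$. A minor secondary subtlety is uniformity over $h$ in the interval $[e^{-1/\alpha}m-m^{2/3},e^{-1/\alpha}m+m^{2/3}]$; the $O(m^{-1/3})$ slack in $q$ perturbs $\mu$ by only $o(1)$, which is harmless since the hypothesis gives $\mu>1$ strictly.
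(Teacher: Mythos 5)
Your reduction to $G(m,h,\beta n)$ and your identification of the supercriticality condition $\mu=\beta^2/(\alpha^2 q)>1$ are correct and match the paper, but the second moment plan has a genuine gap that cannot be patched as described. For a count $X$ of witnesses of size $p=\Theta(\log n)$, the variance is governed by pairs of witnesses that overlap in a common subgraph $H'$, and the relevant ratio for each overlap type is (up to polynomial factors in $p$) inversely proportional to the expected number of copies of $H'$. Your bad subgraphs contain cycles, and the expected number of cycles of length $2j$ in $G(m,h,\beta n)$ is $\approx \mu^{j}/(2j)$, which is $\Theta(1)$ for bounded $j$ since $\mu$ is a constant barely above $1$. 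Consequently, pairs of bad subgraphs sharing a short cycle contribute $\Omega(\mathbb{E}[X]^2)$ (indeed $\mathrm{poly}(\log n)\cdot\mathbb{E}[X]^2$) to $\mathbb{E}[X^2]$, so $\mathrm{Var}[X]=\Omega(\mathbb{E}[X]^2)$ and Chebyshev yields only $\Pr[X=0]=O(1)$. Your stronger claim that $\mathrm{Var}[X]/\mathbb{E}[X]^2=O(1/\mathbb{E}[X])$, i.e.\ Poisson-like concentration, is even further out of reach: even the single-shared-edge overlap alone contributes $\Theta(p^2/n)=\Theta(\log^2 n/n)$ to $\mathrm{Var}[X]/\mathbb{E}[X]^2$, which already breaks the target bound $O(1/n)$. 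This is the classical failure mode of the vanilla second moment method when the witness has a subgraph whose expected count does not tend to infinity; intuitively, the fluctuation in the (Poisson, constant-mean) number of short cycles prevents $X$ from concentrating around its mean.

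The paper takes a different route that sidesteps this entirely: a two-round exposure (sprinkling) argument. It first builds $G(m,h,(1-\epsilon)\beta n)$ with $\epsilon$ chosen so that $\alpha e^{-1/2\alpha}<(1-\epsilon)^{3/2}\beta$, and shows via a Galton--Watson branching process analysis (equivalently, the theory of inhomogeneous random graphs with offspring matrix $\left(\begin{smallmatrix}0&c_1\\ c_2&0\end{smallmatrix}\right)$, $c_1c_2>1$) that this sparser graph already contains a giant component with probability $1-O(1/n)$. It then adds the remaining $\epsilon\beta n$ edges; each independently lands in the giant component with constant probability, so except with exponentially small probability at least two of them do, creating a component with two independent cycles, i.e.\ a complex component. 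If you want to salvage a moment-based argument you would have to condition away or exclude short cycles from your witnesses (a truncated second moment), but you would still need a separate mechanism to push the failure probability down to $O(1/n)$; the sprinkling step is what delivers that quantitative bound.
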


\begin{proof}
Again, by the same reasoning as in the proof of Lemma \ref{lemupperbound}, we can consider the graph $G(m,h,\beta n)$ instead of $G'(m,h,\beta n,(1-\beta)n)$, but now we are interested in an event that $G(m,h,\beta n)$ does not contain a complex component.

Since $\alpha e^{-1/2\alpha} < \beta$, for sufficiently small $\epsilon > 0$, we still have $\alpha e^{-1/2\alpha} < (1-\epsilon)^{3/2}\beta$. Consider the random bipartite graph $G(m,h,(1-\epsilon)\beta n)$ with parts $V$ having $m$ vertices and $U$ having $h$ vertices. For each vertex $v$, let a random variable $r_v$ be the degree of $v$. Since there are $(1-\epsilon)\beta n$ edges in the graph, the expected value of $r_v$ for each $v \in V$ is $c_1 = \frac{(1-\epsilon)\beta n}{m} = \frac{(1-\epsilon)\beta}{\alpha}$. Since $e^{-1/\alpha}m+m^{2/3} < \frac{e^{-1/\alpha}m}{1-\epsilon}$ for sufficiently large $m$, the expected value of $r_v$ for each $v \in U$ is
$$c_2 = \frac{(1-\epsilon)\beta n}{h} > \frac{(1-\epsilon)\beta n}{e^{-1/\alpha}m+m^{2/3}} > \frac{(1-\epsilon)\beta n}{e^{-1/\alpha}m/(1-\epsilon)} = \frac{(1-\epsilon)^2\beta}{\alpha e^{-1/\alpha}}$$
 for sufficiently large $m$. Furthermore, each $r_v$ has a binomial distribution, which converges to Poisson distribution when $m$ increases to infinity. The graph can be viewed as a special case of an \textit{inhomogeneous random graph} \cite{bollobas,soderberg}. With the assumption that $c_1c_2 > \frac{(1-\epsilon)^3\beta^2}{\alpha^2 e^{-1/\alpha}} > 1$, we can conclude that the graph contains a \textit{giant component} (a component containing a constant fraction of vertices of the entire graph) with probability $1-O(1/n)$, where the explanation is given in Appendix \ref{Appendix-4}.

Finally, consider the construction of $G(m,h,\beta n)$ by putting $\epsilon \beta n$ more random edges into $G(m,h,(1-\epsilon)\beta n)$. If two of those edges land in the giant component $C$, a complex component will be created. Since $C$ has size of a constant fraction of $m$, each edge has a constant probability to land in $C$, so the probability that at most one edge will land in $C$ is exponentially low. Therefore, $G(m,h,\beta n)$ does not contain a complex component with probability at most $O(1/n)$.
\end{proof}

We can now prove the following theorem, which serves as a lower bound of $\alpha_k$.

\begin{theorem} \label{thmlowerbound}
In a random instance with strict and $k$-incomplete preference lists, if $\alpha e^{-1/2\alpha} < 1-(1-e^{-1/\alpha})^{k-1}$, then a popular matching exists with \mbox{probability $o(1)$}.
\end{theorem}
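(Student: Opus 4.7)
The plan is to mirror the argument of Theorem \ref{thmupperbound}, replacing the appeal to Lemma \ref{lemupperbound} by an appeal to Lemma \ref{lemlowerbound}. Since $\alpha e^{-1/2\alpha} < 1-(1-e^{-1/\alpha})^{k-1}$, I first choose a small $\delta_2 > 0$ such that
$$\alpha e^{-1/2\alpha} < 1-(1-e^{-1/\alpha})^{k-1}-\delta_2,$$
and set $J_2 = [(1-(1-e^{-1/\alpha})^{k-1}-\delta_2)n,\ (1-(1-e^{-1/\alpha})^{k-1}+\delta_2)n]$. By Lemma \ref{beta}, $|A_2| \in J_2$ holds with probability $1-o(1)$, and every integer $t \in J_2$ then satisfies $\beta := t/n > \alpha e^{-1/2\alpha}$, which is precisely the hypothesis needed to apply Lemma \ref{lemlowerbound}.

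Next I condition on $|A_2|=t$ for an arbitrary fixed $t \in J_2$. The top-choice graph then has exactly $t$ normal edges and $n-t$ last resort edges, so Lemma \ref{lemG2} applies to the event $E$ defined as ``the graph does not contain a complex component.'' Lemma \ref{lemlowerbound} gives $\Pr_{G'(m,h,t,n-t)}[E] = O(1/n)$ uniformly over $h$ in the prescribed interval, so Lemma \ref{lemG2} upgrades this to $\Pr_H[E] = O(n^{-1/3})$ on the top-choice graph $H$. Combined with the equivalence ``a popular matching exists iff the top-choice graph contains no complex component'' supplied by Lemmas \ref{pmexists} and \ref{complexcomp2}, this yields, for the event $E_2$ that a popular matching exists,
$$\Pr[E_2 \mid |A_2|=t] = O(n^{-1/3}) = o(1),$$
uniformly over all $t \in J_2$.

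I then close the argument using total probability:
\begin{align*}
\Pr[E_2] &= \sum_{t}\Pr[|A_2|=t]\cdot\Pr[E_2\mid |A_2|=t] \\
&\leq \Pr[|A_2|\notin J_2] + \max_{t \in J_2}\Pr[E_2\mid |A_2|=t] \\
&\leq o(1) + o(1) = o(1).
\end{align*}
Since Lemmas \ref{beta}, \ref{lemG2}, and \ref{lemlowerbound} already carry the substantial analytic content (concentration of $|A_2|$, the transfer from $G'$ to the top-choice graph, and the giant-component argument for inhomogeneous random graphs, respectively), no step of this wrap-up is expected to present real difficulty. The only mild care needed is to pick $\delta_2$ small enough so that the entire interval $J_2/n$ lies strictly above $\alpha e^{-1/2\alpha}$, which guarantees that Lemma \ref{lemlowerbound} is applicable for every $t \in J_2$ and makes the maximum in the display above genuinely $o(1)$.
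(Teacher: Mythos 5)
Your proposal is correct and follows essentially the same route as the paper, which proves Theorem \ref{thmlowerbound} by mirroring the proof of Theorem \ref{thmupperbound} with Lemma \ref{lemlowerbound} in place of Lemma \ref{lemupperbound}; your write-up is in fact more explicit than the paper's, which compresses the conditioning and total-probability steps into ``by the same reasoning.'' The only cosmetic difference is that the paper defines $E_2$ as the event that a popular matching does \emph{not} exist and shows $\Pr[E_2]=1-o(1)$, whereas you bound the complementary event directly; these are equivalent.
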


\begin{proof}
Like in the proof of Theorem \ref{thmupperbound}, we can select a small enough $\delta_2 > 0$ such that $\alpha e^{-1/2\alpha} < 1-(1-e^{-1/\alpha})^{k-1}-\delta_2$. Let $J_2 = [(1-(1-e^{-1/\alpha})^{k-1}-\delta_2)n, (1-(1-e^{-1/\alpha})^{k-1}+\delta_2)n]$. We have $\frac{|A_2|}{n} \in J_2$ with probability $1-o(1)$ and $\beta = \frac{t}{n} > \alpha e^{-1/2\alpha}$ for any integer $t \in J_2$.

Now we define $E_2$ to be an event that a popular matching does not exist in a random instance. By the same reasoning as in the proof of Theorem \ref{thmupperbound}, we can prove that $\Pr\left[E_2\big||A_2| = t\right] = 1-o(1)$ for every fixed $t \in J_2$ and reach an analogous conclusion that $\Pr[E_2] = 1-o(1)$.
\end{proof}

\subsection{Phase Transition}

Since $f(x) = x e^{-1/2x} - (1-(1-e^{-1/x})^{k-1})$ is a strictly increasing function in $[1, \infty)$ for every $k \geq 1$, $f(x) = 0$ can have at most one root in $[1, \infty)$. That root, if exists, will serve as a phase transition point $\alpha_k$. In fact, for $k \geq 4$, $f(x) = 0$ has a unique solution in $[1, \infty)$; for $k \leq 3$, $f(x) = 0$ has no solution in $[1, \infty)$ and $\alpha e^{-1/2\alpha} > 1-(1-e^{-1/\alpha})^{k-1}$ for every $\alpha \geq 1$, so a popular matching always exists with high probability without a phase transition regardless of value of $\alpha$. Therefore, from Theorems \ref{thmupperbound} and \ref{thmlowerbound} we can conclude our main theorem below.

\begin{theorem} \label{mainthm}
In a random instance with strict and $k$-incomplete preference lists with $k \geq 4$, if $\alpha > \alpha_k$, where $\alpha_k \geq 1$ is the root of equation $x e^{-1/2x} = 1-(1-e^{-1/x})^{k-1}$, then a popular matching exists with probability $1-o(1)$; and if $\alpha < \alpha_k$, then a popular matching exists with probability $o(1)$. In such a random instance with $k \leq 3$, a popular matching exists with probability $1-o(1)$ for any value of $\alpha \geq 1$.
\end{theorem}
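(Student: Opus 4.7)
The proof of Theorem \ref{mainthm} reduces cleanly to an analytic study of
\[
f(x) := x e^{-1/(2x)} - \bigl(1 - (1-e^{-1/x})^{k-1}\bigr)
\]
on $[1,\infty)$, because Theorems \ref{thmupperbound} and \ref{thmlowerbound} translate the sign of $f(\alpha)$ directly into the asymptotic existence probability: $f(\alpha) > 0$ gives probability $1-o(1)$, and $f(\alpha) < 0$ gives probability $o(1)$. So the task is to show that $f$ has a unique zero $\alpha_k \in (1,\infty)$ for every $k \geq 4$, and is strictly positive on all of $[1,\infty)$ for every $k \leq 3$.

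The main step is to prove that $f$ is strictly increasing on $[1,\infty)$ for every $k \geq 1$. Differentiating,
\[
f'(x) = e^{-1/(2x)}\left(1+\frac{1}{2x}\right) - \frac{k-1}{x^{2}}\,e^{-1/x}(1-e^{-1/x})^{k-2},
\]
and I would verify $f'(x) > 0$ via the substitution $u = e^{-1/x} \in [e^{-1},1)$, which turns the desired inequality into the one-variable statement
\[
\sqrt{u}\left(1-\tfrac{1}{2}\ln u\right) > (k-1)\,u\,(1-u)^{k-2}(\ln u)^{2}.
\]
To eliminate $k$, I would use that for each fixed $u$ the right-hand side is maximized (over real $k \geq 2$) at $k^* = 1 - 1/\ln(1-u)$, giving the uniform bound $(k-1)(1-u)^{k-2} \leq 1/\bigl(e(1-u)|\ln(1-u)|\bigr)$; substituting this reduces monotonicity to an inequality in $u$ alone, which can be verified on $[e^{-1},1)$ by direct estimation.

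Next I would pin down the boundary values. As $x \to \infty$, $x e^{-1/(2x)} \sim x \to \infty$ and $(1-e^{-1/x})^{k-1} \to 0$, so $f(x) \to \infty$. At $x = 1$,
\[
f(1) = e^{-1/2} - 1 + (1-e^{-1})^{k-1},
\]
which is strictly decreasing in $k$ because $1-e^{-1} < 1$; a one-shot numerical check then shows $f(1) > 0$ for $k = 3$ (hence for $k \leq 3$) and $f(1) < 0$ for $k = 4$ (hence for $k \geq 4$).

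Combining: for $k \leq 3$, monotonicity and $f(1) > 0$ yield $f(\alpha) > 0$ for every $\alpha \geq 1$, so Theorem \ref{thmupperbound} gives a popular matching with probability $1-o(1)$ uniformly in $\alpha$. For $k \geq 4$, monotonicity together with $f(1) < 0$ and $\lim_{x \to \infty} f(x) = \infty$ produces a unique root $\alpha_k \in (1,\infty)$ with $f(\alpha) > 0 \Leftrightarrow \alpha > \alpha_k$, and the two sides of the phase transition follow from Theorems \ref{thmupperbound} and \ref{thmlowerbound}, respectively. The main obstacle is the uniform-in-$k$ monotonicity; everything else is routine calculus and a one-line appeal to the two bound theorems.
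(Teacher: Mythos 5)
Your proposal is correct and takes essentially the same route as the paper: reduce the theorem to the sign of $f(x) = xe^{-1/2x} - (1-(1-e^{-1/x})^{k-1})$ via Theorems \ref{thmupperbound} and \ref{thmlowerbound}, use monotonicity of $f$ on $[1,\infty)$ together with the boundary values $f(1)$ and $f(x)\to\infty$ to get a unique root for $k\geq 4$ and no root for $k\leq 3$. In fact you supply more detail than the paper, which merely asserts that $f$ is increasing; your derivative computation and the uniform-in-$k$ bound $(k-1)(1-u)^{k-2}\leq 1/\bigl(e(1-u)\lvert\ln(1-u)\rvert\bigr)$ check out (it is the unconstrained maximum over real $k$, hence still a valid upper bound when the maximizer $k^*$ falls below $2$), and the sign checks $f(1)>0$ at $k=3$ and $f(1)<0$ at $k=4$ are accurate.
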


\section{Conclusion and Future Work}

For each value of $k \geq 4$, the phase transition occurs at the root $\alpha_k \geq 1$ of equation $x e^{-1/2x} = 1-(1-e^{-1/x})^{k-1}$ as shown in Figure \ref{figure-1}. Note that as $k$ increases, the right-hand side of the equation converges to 1, hence $\alpha_k$ converges to Mahdian's value of $\alpha_* \approx 1.42$ in the case with complete preference lists.

\begin{figure}[h]
\captionsetup{width=0.9\textwidth}
\centering
\includegraphics[width=110mm]{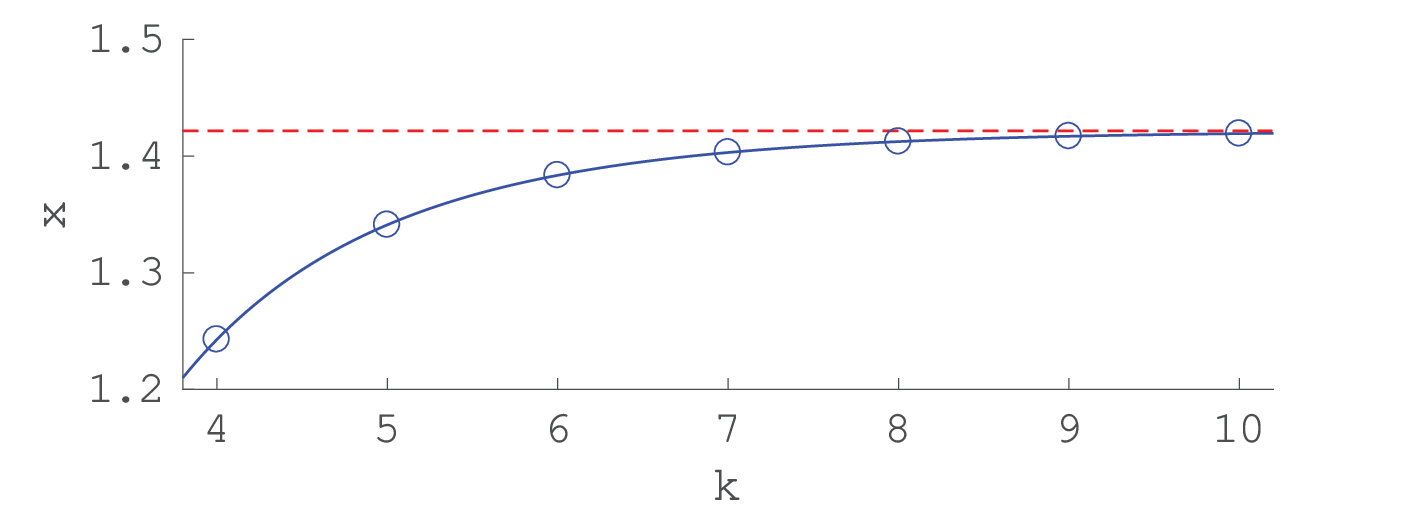}
\caption{Solution in $[1, \infty)$ of the equation $x e^{-1/2x} = 1-(1-e^{-1/x})^{k-1}$ for each $k \geq 4$, with the dashed line plotting $x = \alpha_* \approx 1.42$} \label{figure-1}
\end{figure}

\begin{remark}
For each person $a$, as the length of $P_a$ increases, the probability that $P_a \nsubseteq F$ and thus $a \in A_2$ also increases, and so do the expected size of $A_2$ and the phase transition point. Therefore, in the case that the lengths of people's preference lists are fixed but not equal (e.g. half of the people have preference lists with length $k_1$, and another half have those with length $k_2$), the phase transition will occur between $\alpha_{k_{\text{min}}}$ and $\alpha_{k_{\text{max}}}$, where $k_{\text{min}}$ and $k_{\text{max}}$ are the shortest and longest lengths of people's preference lists, respectively.
\end{remark}

In many real-world situations, ties can and are likely to occur among people's preference lists. \textsc{rpmp} in the case with ties allowed was mentioned by Mahdian \cite{mahdian} and simulated by Abraham et al. \cite{abraham} using a parameter $t$ to denote the probability that each entry in a preference list is tied with previous entry. Intuitively, and also confirmed by the experimental results of \cite{abraham}, when ties are very likely to occur ($t$ is very close to 1), a popular matching is likely to exist even when $\alpha = 1$. However, the transition point for each value of $t$ has still not been found yet. A possible future work is to study the transition point in this case for each value of $t$, both with complete and incomplete preference lists. Another interesting generalization of \textsc{rpmp} is the Capacitated House Allocation Problem, where each item can be matched with more than one person. A possible future work is to find the transition point in the most basic case where every item has the same capacity $c$.

\newpage
\appendix
\section{Proof of Inequalities (\ref{ineq1}) and (\ref{ineq2})} \label{Appendix-1}

For $k \geq 2$, we will prove that $c_1 = \frac{c_2}{(k-1)(c_2+2)}$ satisfies inequalities (\ref{ineq1}) and (\ref{ineq2}).

Let $p = 1-e^{-1/\alpha}$. We have $0 < p < 1$ and $0 < c_1 < 1$. So,
\begin{align*}
(p-c_1)^{k-1} &= p^{k-1} - \binom{k-1}{1} p^{k-2}c_1 + \binom{k-1}{2} p^{k-3}c_1^2 - \cdots + (-1)^{k-1}\binom{k-1}{k-1} c_1^{k-1} \\
&\geq p^{k-1} - \left[(k-1) c_1 + (k-1)^2 c_1^2 + \cdots + (k-1)^{k-1} c_1^{k-1}\right] \\
&= p^{k-1} - \left[\frac{c_2}{c_2+2} + \left(\frac{c_2}{c_2+2}\right)^2 + \cdots + \left(\frac{c_2}{c_2+2}\right)^{k-1}\right] \\
&> p^{k-1} - \left[\frac{c_2}{c_2+2} + \left(\frac{c_2}{c_2+2}\right)^2 + \cdots\right] \\
&= p^{k-1} - \frac{\frac{c_2}{c_2+2}}{1-\frac{c_2}{c_2+2}} \\
&= p^{k-1} - \frac{c_2}{2}.
\end{align*}

Therefore $(1-e^{-1/\alpha}-c_1)^{k-1} > (1-e^{-1/\alpha})^{k-1} - \frac{c_2}{2}$. Also, we have
\begin{align*}
(p+c_1)^{k-1} &= p^{k-1} + \binom{k-1}{1} p^{k-2}c_1 + \binom{k-1}{2} p^{k-3}c_1^2 + \cdots + \binom{k-1}{k-1} c_1^{k-1} \\
&\leq p^{k-1} + (k-1) c_1 + (k-1)^2 c_1^2 + \cdots + (k-1)^{k-1} c_1^{k-1} \\
&= p^{k-1} + \frac{c_2}{c_2+2} + \left(\frac{c_2}{c_2+2}\right)^2 + \cdots + \left(\frac{c_2}{c_2+2}\right)^{k-1} \\
&< p^{k-1} + \frac{c_2}{c_2+2} + \left(\frac{c_2}{c_2+2}\right)^2 + \cdots \\
&= p^{k-1} + \frac{\frac{c_2}{c_2+2}}{1-\frac{c_2}{c_2+2}} \\
&= p^{k-1} + \frac{c_2}{2}.
\end{align*}

Therefore $(1-e^{-1/\alpha}+c_1)^{k-1} > (1-e^{-1/\alpha})^{k-1} + \frac{c_2}{2}$.

\section{Explanation of the Lower Bound} \label{Appendix-4}

An inhomogeneous random graph is a generalization of an Erd\H{o}s-R\'{e}nyi graph, where vertices of the graph are divided into several (finite or infinite) types. Each vertex of type $i$ has $\kappa_{ij}$ expected neighbors of type $j$.

The bipartite graph $G(m,h,(1-\epsilon)\beta n)$ can be considered as a special case of the inhomogeneous random graph where there are two types of vertices, with $\kappa_{11} = 0$, $\kappa_{12} = c_1$, $\kappa_{21} = c_2$, and $\kappa_{22} = 0$. It has an \textit{offspring matrix}
$$T_\kappa = \{\kappa_{ij}\}_{i,j=1}^2 = \begin{bmatrix} 0 & c_1 \\ c_2 & 0 \end{bmatrix},$$
which has the largest eigenvalue $||T_\kappa|| = \sqrt{c_1c_2} > 1$. This is a necessary and sufficient condition to conclude that $G(m,h,(1-\epsilon)\beta n)$ contains a giant component with $1-o(1)$ probability \cite{bollobas,soderberg}. In fact, by giving a precise bound in each step of \cite{bollobas}, it is possible to show that the probability is greater than $1-O(1/n)$ as desired.

Alternatively, we hereby show a direct proof of the bipartite case by approximating the construction of the graph with a Galton-Watson branching process similar to that in the proof of existence of a giant component in the Erd\H{o}s-R\'{e}nyi graph in \cite[pp.182-192]{alon}.

The Galton-Watson branching process is a process that generates a random graph in a breadth-first search tree manner when given a starting vertex and a distribution of the degree of each vertex. The process begins when the starting vertex spawns a number of children which are put in the queue in some order. Then, the first vertex in the queue also spawns children which are put at the end of the queue by the same manner, and so on. The process may stop at some point when the queue becomes empty, or otherwise continues indefinitely.

Consider the construction of $G(m,h,(1-\epsilon)\beta n)$ with parts $V$ and $U$ starting at a vertex and discovering new vertices in a breadth-first search tree manner. We approximate it with the Galton-Watson branching process. Let $T$ be the size of the process ($T = \infty$ if the process continues forever). Let $z_1$ and $z_2$ be the probability that $T < \infty$ when starting the process at a vertex in $V$ and $U$, respectively. Also, let $Z_1$ and $Z_2$ be the number of children the root has when starting the process at a vertex in $V$ and $U$, respectively.

Given that the root has $i$ children, in order for the branching process to be finite, all of the $i$ branches must be finite, so we get the equations.
\begin{align*}
z_1 &= \sum_{i=0}^{\infty} \Pr[Z_1 = i]z_2^i; \\
z_2 &= \sum_{i=0}^{\infty} \Pr[Z_2 = i]z_1^i.
\end{align*}

Therefore,
\begin{align*}
z_1 &= \sum_{i=0}^{\infty} \frac{c_1^i e^{-c_1}}{i!} \left(\sum_{j=0}^{\infty} \frac{c_2^j e^{-c_2} z_1^j}{j!}\right)^i \\
&= \sum_{i=0}^{\infty} \frac{c_1^i e^{-c_1}}{i!} e^{c_2(z_1-1)i} \\
&= e^{c_1(e^{c_2(z_1-1)}-1)}.
\end{align*}

Setting $y = 1-z_1$ yields the equation
\begin{equation} \label{eq1}
1-y = e^{c_1(e^{-c_2y}-1)}.
\end{equation}

Define $g(y) = 1-y-e^{c_1(e^{-c_2y}-1)}$. We have $g(0) = 1-0-1 = 0$, $g(1) < 0$, and $g'(0) = c_1c_2-1$. By the assumption that $c_1c_2 > 1$, we have $g'(0) > 0$, so there must be $y \in (0,1)$ such that $g(y) = 0$, thus being a solution of (\ref{eq1}). So, $\Pr[T = \infty] = y \in (0,1)$, when $y$ is a solution of (\ref{eq1}), meaning that there is a constant probability that the process continues indefinitely. Moreover, from the property of Poisson distribution we can show that $\Pr[x < T < \infty]$ is exponentially low in term of $x$. Therefore, we can select a constant $k_1$ such that $\Pr[k_1 \log n < T < \infty] < O(1/n^2)$.

Finally, when we perform the Galton-Watson branching process at a vertex in $G(m,h,(1-\epsilon)\beta n)$, there is a constant probability that the process will continue indefinitely, thus creating a giant component. Otherwise, with probability $1-O(1/n^2)$ we will create a component with size smaller than $k_1 \log n$, so we can remove that component from the graph and then repeatedly perform the process starting at a new vertex. After repeatedly performing this process for some logarithmic number of times, we only remove $O(\log^2 n)$ vertices from the graph, which does not affect the constant $y = \Pr[T = \infty]$, so the probability that we never end up with a giant component in every time is at most $O(1/n)$. Therefore, $G(m,h,(1-\epsilon)\beta n)$ contains a giant component with probability $1-O(1/n)$.

\begin{remark}
In the complete preference lists setting with $\alpha e^{-1/2\alpha} < (1-\epsilon)^{3/2}$, we have $c_1 = \frac{1-\epsilon}{\alpha}$ and $c_2 > \frac{(1-\epsilon)^2}{\alpha e^{-1/\alpha}}$, which we still get $c_1c_2 = \frac{(1-\epsilon)^3}{\alpha^2 e^{-1/\alpha}} > 1$, which is a sufficient condition to reach the same conclusion.
\end{remark}

\end{document}